\newtheorem{definition}{Definition}
\newtheorem{lemma}{Lemma}
\newtheorem{theorem}{Theorem}
\newtheorem{proposition}{Proposition}
\newcommand{\Rmnum}[1]{\expandafter\@slowromancap\romannumeral #1@}
\begin{document}
%
\title{Tiered cloud storage via two-stage, latency-aware bidding}

\author{Yang Zhang, Arnob Ghosh, Vaneet Aggarwal, and Tian Lan
\thanks{Y. Zhang, A. Ghosh, and V. Aggarwal are with the School of Industrial Engineering, Purdue University, West Lafayette IN 47907, email: \{zhan1925, ghosh39, vaneet\}@purdue.edu. T. Lan is with the  Department of Electrical and Computer Engineering, George Washington University, Washington D.C. 20052, email: tlan@gwu.edu. }}

\maketitle


\begin{abstract}
In cloud storage, the digital data is stored in logical storage pools, backed by heterogeneous physical storage media and computing infrastructure that are managed by a Cloud Service Provider (CSP). One of the key advantages of cloud storage is its elastic pricing mechanism, in which the users need only pay for the resources/services they actually use, e.g., depending on the storage capacity consumed, the number of file accesses per month, and the negotiated Service Level Agreement (SLA). To balance the tradeoff between service performance and cost, CSPs often employ different storage tiers, for instance, cold storage and hot storage. Storing data in hot storage incurs high storage cost yet delivers low access latency, whereas cold storage is able to inexpensively store massive amounts of data and thus provides lower cost with higher latency.

In this paper, we address a major challenge confronting the CSPs utilizing such tiered storage architecture - how to maximize their overall profit over a variety of storage tiers that offer distinct characteristics, as well as file placement and access request scheduling policies. To this end, we propose a scheme where the CSP offers a two-stage auction process for (a) requesting storage capacity, and (b) requesting accesses with latency requirements. Our two-stage bidding scheme provides a hybrid storage and access optimization framework with the objective of maximizing the CSP's total net profit over four dimensions: file acceptance decision, placement of accepted files, file access decision and access request scheduling policy. The proposed optimization is a mixed-integer nonlinear program that is hard to solve. We propose an efficient heuristic to relax the integer optimization and to solve the resulting nonlinear stochastic programs. The algorithm is evaluated under different scenarios and with different storage system parameters, and insightful numerical results are reported by comparing the proposed approach with other profit-maximization models. We see a profit increase of over 60\% of our proposed method compared to other schemes in certain simulation scenarios.

\end{abstract}

\begin{IEEEkeywords}
Cloud Storage, Latency Dependent Pricing, Two-stage bidding, Mixed-Integer nonlinear Programming.
\end{IEEEkeywords}

%
\IEEEpeerreviewmaketitle

\section{Introduction}
\textcolor{black}{
\subsection{ Motivation}}

\IEEEPARstart{T}{he} demand for online data storage is increasing at an unprecedented rate due to growing trends such as cloud computing, big data analytics, and E-commerce activities   \cite{ad09}, and recently by the rise of social networks.
%
%
Cloud storage service is now provided by multiple cloud service providers (CSP) such as Amazon's S3, Amazon's Cloud drive, Dropbox, Google Drive, and Microsoft Azure\cite{94}. \textcolor{black}{Amazon S3 offer 3 major storage classes for different use cases: i) Amazon S3 Standard for general-purpose storage of frequently accessed data; ii) Amazon S3 Standard for Infrequent Access for long-lived, but less frequently accessed data, and iii) Amazon Glacier for long-term archive, while Dropbox has a simple pricing framework, providing two types of storage (Standard and Advanced) for individuals and one type for enterprises.}

Many cloud storage service providers offer throughput or IOPS (Input and Output operations Per Second) guarantees; however, as a large number of files is stored, the latency of accessing stored files becomes an important criterion to evaluate the effectiveness of these storage services. \textcolor{black}{However, there is no consideration of the latency of accessing the stored data in the above pricing schemes in a shorter time scale. Thus, if a user needs to access the file quite often with a lower latency it may not be able to get that service on a given day. }


\textcolor{black}{
\subsection{Market Architecture}}

\textcolor{black}{We consider a two-stage market for providing lower latencies to the users for a certain period in an auction mechanism. All the files are stored in the back-up storage. In stage 1, the users bid in order to store their files in the cold or hot storage which are faster compared to the back-up storage. Additionally, there is a second market which runs more frequently compared to the first stage where the users can update their bids if they require low latency or faster access. Using our two-stage bidding platform, the CSP  can maximize the total profits over file storage and file access while meeting the users' access requirements. On the other hand,  the users with lower latency requirements will be able to get their required quality of service. Thus, the users' utilities will be maximized. }


\textcolor{black}{
\subsection{Challenges}}
One major challenge confronting the service providers these days is: given the price customers are willing to pay, and the expectation of future access rates, how can a service provider maximize its overall profit over a variety of file storage decisions, file access decisions, and access request scheduling policies. Further, they should also ensure that they provide a reliable, efficient storage that meets customer's latency requirements. This challenge necessities novel pricing mechanisms that go beyond existing approaches such as resource-based pricing, usage-based pricing, time-dependent pricing in cloud computing and online storage.

\textcolor{black}{
\subsection{Contribution}}
In order to store the files in the {\em cold storage} or {\em hot storage}, we propose a systematic framework for two-stage, latency-dependent bidding, which aims to maximize the cloud storage provider's net profit in tiered cloud storage systems where tenants may have different budgets, access patterns and performance requirements as described in Section~\ref{sec:system}. The proposed two-stage, latency-aware bidding mechanism works as follows.  The cloud service provider (CSP) has two tiers of storage: hot storage and cold storage with different service rates.   Users can bid for storage and access, in two separate stages, without knowing how the CSP stores the contents.   In the first stage (\textit{request for storage}), the user specifies storage size, expected access rates, and latency requirements.  If the CSP decides to accept the bid, it will place two copies of data:  one in the cold storage and another one in either the hot storage or cold storage.  In the second stage (\textit{request for access}), the CSP can decide whether to accept the access requests based on the bid and where to retrieve the files from to meet the access latency requirements. The second-stage auction runs on a shorter time scale (every hour) and the first-stage auction runs on a longer time scale (every day) since the access pattern of files changes faster.

%

The second-stage decision inherently depends on the first-stage decision. For example, if the CSP decides to store both the original file and its copy in cold storage, the file can be accessed from the cold storage only. However, if accessing from cold storage does not meet the access latency requirement (storage servers might get congested due to high request arrival rates and low service rates),  the CSP may not be able to serve the request at once. In this case, the CSP will lose profit due to the loss of the access bids from the users.
The optimal first-stage decision decision inherently depends on the second-stage decision. For example, if a user bids at a low price for storage, the file may be stored in the cold storage; however, the user may then bid at a higher price with lower latency requirement in the second stage.  In that case, its bid may not be accepted as the latency requirement may not be matched because the file was stored in the cold storage in the first place.  Unfortunately, the access bids, latency requirements, and the access arrival rates all are random variables, and the realization of these random variables are not known beforehand.

We first formulate the second-stage decision problem whether to accept the bids and scheduling decision (whether to access the file from the cold or hot storage) given a first-stage decision as an integer programming problem with non-convex constraints. Since the second stage parameters are random, we consider multiple random realizations of these variables and average the objective function over these realizations (or scenarios). We then formulate the first-stage decision problem as a deterministic equivalent program where we maximize the profit from the storage and the expected second stage profit while satisfying the latency requirements for each scenario (Section~\ref{sec:formulation}). However, the problem again turns out to be an integer programming with non-convex constraints. We first relax the integer constraints by using sigmoid function as the penalty, which closely matches the required penalty function. The relaxed problem is smooth and we can obtain a local solution using the KKT conditions. The solution of the relaxed problem is then converted to the nearest integers. Because of the sigmoid function, the solution attained by the relaxed problem and the feasible one is quite close. In Section~\ref{sec:simulation}, we show the strength of our proposed method in achieving significantly higher profit as compared to the other algorithms which do not consider the second stage recourse decision while taking the first-stage decision. 

Our solution exploits a number of key design tradeoffs. First, any efficient cloud storage and access strategies must meet both the service provider's constraints and customers' requirements. The constraints from the service provider might come from tiered cloud storage architecture, storage-related costs, reliability level and capacities of each tier of storage. The requirements from customers include bidding prices of storage and access, latency requirements and expected access request arrival rates. Second, while placing as much content as possible in cold storage could potentially reduce storage cost, it may be insufficient to meet clients' latency requirements. On the other hand, although storing more content in hot storage improves service latency, it results in higher storage price, which might cause customer churn. A solution exploiting this tradeoff is thus necessary to determine the optimal placement (and duplication strategy) of files in tiered storage. As a result, jointly scheduling all the file access requests to avoid congestion in each storage tier becomes challenging and must take into account the impact of request patterns and access decisions of all clients.

The main contribution in this paper can be summarized as follows:

   \noindent 1. \emph{Comprehensive future consideration}: This paper aims to propose a systematic framework that integrates both file storage and file access, which optimizes the system over four dimensions: file acceptance decision, placement of accepted files, file access decision and access request scheduling. The proposed framework encompasses future access information such as bidding price for access, latency requirements and expected access request arrival rates.

  \noindent 2. \emph{Two-Stage, Latency-Aware Bidding}:  Most storage pricing schemes consider both storage and access at the same time; our scheme is novel as it allows users to bid for storage and access (with latency requirements) separately and gives the CSP more flexibility in optimizing the tiered-storage to maximize the profits.

   \noindent 3. \emph{Computational Efficiency}:  We quantify the service latency with respect to both hot and cold storage. The proposed optimization is modeled as a mixed-integer nonlinear program (MINLP), which is hard to solve. We propose an efficient heuristic to relax the integer optimization and solve the non-convex problem.

 \noindent 4. \emph{Insightful Numerical Results}: The performance of the proposed approach is evaluated in various cases. It is observed that the profits obtained from the proposed method are higher than those of other methods, and the access request acceptance rate (ARAR) also dominates that of other methods as the capacity of the cold storage or the service rate of hot storage increases.  For example, we see a profit increase of over 60\% of our proposed method compared to other schemes  as the capacity of cold storage increases beyond 500TB with our simulation scenario.

 The rest of the paper is organized as follows. Section \ref{sec:related} describes the related work. The system model for the tiered architecture and the two-stage auction framework is described in Section \ref{sec:system}, and the two-stage optimization problem is formally defined in Section \ref{sec:formulation}. Section \ref{sec:algo} gives the proposed solution for the mixed integer non-linear program and Section \ref{sec:simulation}  validates our proposed policy and evaluates its performance using numerical studies. Finally, Section \ref{sec:concl} presents our conclusions.  

\section{ Related Literature}\label{sec:related}
 Tiered storage has been used in many contexts so as to achieve better cost-performance tradeoffs by placing the workload on a hybrid storage that includes multiple hot and cold storage tiers   \cite{gu11}  \cite{kim14}  \cite{li14}  \cite{wang14}  \cite{oa11}  \cite{i12}  \cite{tier_store}. However, the pricing solution for multi-tier cloud storage is quite limited to resource/usage-based pricing, as shown in   \cite{Naldi13}. Some of the recent pricing schemes for online storage providers include those  AWS S3, Dropbox, Google Drive, etc. and their  current pricing plans can be found at   \cite{amazonaws},   \cite{drop},   \cite{google}, respectively. Typically, they often offer a flat price for the storage service with a limited storage capacity or access rates. For example, Amazon provides three types of storage facilities depending on the access rates. However, our model is different from the existing practices. First, we consider a two-stage auction model where in the first-stage, the users can move its file to (tiered) cold/hot storage by adjusting their bids. In the second-stage, the users bid to access the files. Note that the first-stage auction is run once in a day (or week), while the second-stage once an hour (or day). Thus, it provides a greater flexibility to the users to adjust their bids according to their daily requirements. In contrast, the user has to pay a flat rate price for a month if one wants to achieve a faster access rate in the Amazon. Second, in contrast to the pricing mechanisms of Amazon and Dropbox, we consider the latency requirements of the users while accepting the bids even at the first-stage. 

Pricing for cloud computing has been widely studied  \cite{xu,ebay,double_auction,random_auction,online_auction,lin}.
\textcolor{black}{
Game Theory and Auctions are broadly adopted as mechanisms for cloud service. For example, in \cite{chris16}, a game theoretical model is used to induce a truthful  cloud storage selection mechanism where the service providers bid the quality of service; in \cite{zhou15} , an online procurement auction mechanism is proposed to maximize the long-term social welfare; A Vickery Clarke Grove (VCG) auction-based dynamic pricing scheme is proposed for cloud services in \cite{wu17}. Recently, a stackelberg game model is proposed in \cite{ma16} to derive the pricing scheme. The stackelberg game consists of two stages-- i) in the first stage, the service provider determines a price which is both time and location dependent, ii) in the second stage, the users decide the schedule of the mobile traffic depending on the prices. However, compared to the above papers, we consider a scenario where the users bid in a two-stage-- in the first stage, the users bid in order to store their files in the hot or cold storage; in the second stage, the users again bid for the latency requirements and the access arrival requests. The cloud service provider in the first stage is unaware of the bids of the users in the second stage. However, the optimal decision is inherently depends on the second stage decisions. Thus, the problem is inherently challenging , and turns out to be a non-convex mixed integer problem.}

{\em To the best of our knowledge, such kind of auction mechanisms have not been considered in the literature yet.} Additionally, the above papers mainly considered Vickrey-Clarke-Groves (VCG) type auctions \cite{varian2014vcg} or their variants. However, our problem turns out to be a complex non-convex optimization problem. A VCG-type auction will have high complexity and the optimality cannot be guaranteed because of the non-convexity of the problem.

\section{System Model}\label{sec:system}
\subsection{Tiered Architecture}
We consider a cloud storage provider (CSP) which has a tiered storage architecture. Each file is stored in an inexpensive {\em back-up} storage facility. For example, Amazon Web service (AWS) charges $0.023$ per GB per month for standard storage. The back-up  storage can be considered to consist of hard disk drive (HDD) which is inexpensive, but, the service rate is slow and unreliable. Since it is inexpensive, the latency cannot be guaranteed as a lot of files can be stored. In order to provide a faster service the CSP can offer two types of storage -- i){\em cold storage} and ii) {\em hot storage}. Cold Storage is made of SSHD (combination of solid state drive (SSD) and HDD) which is expensive compared to the HDD, however, the service rate is faster and there is more reliability against disk failure. The hot storage is the most expensive one as it is made of SSD, however, the service rate is also the fastest. Thus, if files are stored in the hot storage, they will have faster access. 

\subsection{Two-Stage Auction Framework}
\begin{figure}
\centering
\includegraphics[trim=3.8in 2.8in 2.2in .7in, clip, width=.45\textwidth]{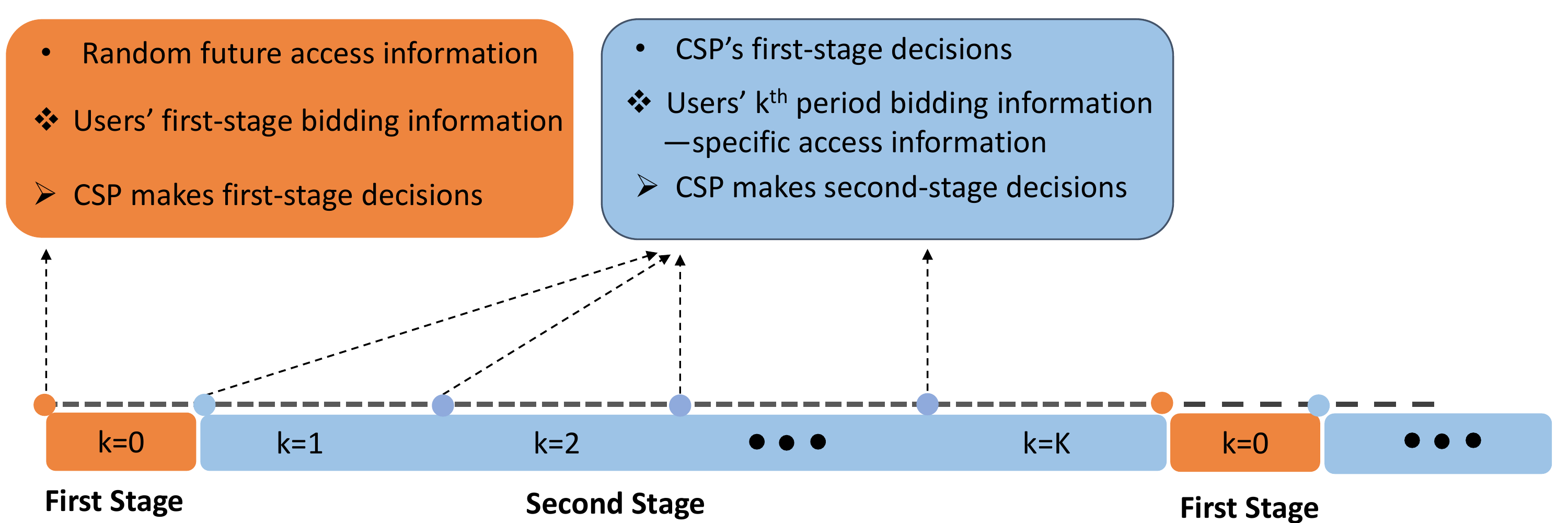}
\caption{Two-Stage Auction Framework}
\label{fig0}
\end{figure}

In order to store the files in the {\em cold storage} or {\em hot storage}, the CSP will operate a market. In the {\em first stage}, the users \footnote{We denote all the clients of cloud service providers as users. Thus, users may be the individuals, enterprises, or organizations}bid to store their files in the upgraded storage facilities. The CSP decides whether to accept the file and where to store the original file and its copies. We  model the storage platform as providing dual replication of files, so each file has a duplicated copy. \footnote{Multiple copies of the file can be created in practice. However, it will increase the storage requirement and the computational complexity of computing the acceptance/rejection of bid, and the access probabilities. The consideration of the scenario where any specific number of copies can be stored is left for future work.}\textcolor{black}{To ensure data durability and availability, data replication is broadly adopted by data center storage systems, such as Hadoop Distributed File System \cite{shva10}, RAMCloud \cite{diego11}, and Google File System \cite{sanjay03}.} If a file is accepted for storage, the user pays the bidding price, otherwise, it pays nothing. The CSP stores one copy in the cold storage. The CSP also decides whether to store the other copy either in the hot or the cold storage.

In the {\em second stage}, the users whose files get accepted for storage, bid again for accessing the files. The CSP needs to decide whether to accept the access requests and if accepted, from where the files should be accessed (either cold or hot storage) in order to meet the access latency requirements.  If the user's request is accepted, it pays the bidding price, otherwise, it pays nothing. The second stage decision inherently depends on the first stage decision. For example, if the CSP decides to store the both the original file and its copy in cold storage, the file can be accessed from the cold storage only. However, if accessing from cold storage does not meet the access latency requirement (storage servers might get congested due to high request arrival rates and low service rates),  the CSP may not be able to serve the request at once. In this case, the CSP will lose profit due to the loss of the access bids from the users. Fig.~\ref{fig0} depicts graphically the major considerations in the two-stage problem.

The {\em optimal} first stage decision of the CSP  inherently depends on the second stage decisions. For example, if the CSP decides to store a file in the cold storage because of its low storage bid, it can bid a high value for the access in the second stage. However, the CSP may not accept the bid because of the lower service rate of the cold storage. Hence, the CSP's profit will be reduced. These access bids, the access arrival requests, and the latency requirements are random variables which cannot be known during the first stage decision process which makes finding an optimal first stage decision is inherently difficult. We assume that the two bidding stages take place at different time-scales. In particular, while users' files typically remain in the storage system for a long time period (e.g., a day, or several days in Stage 1), the latency-dependent file access decisions (in Stage 2) can be adjusted more frequently on a much smaller time-scale (e.g., every hour), e.g., during busy and off-peak hours. Intuitively, the user's need to access  a file changes on a shorter time scale compared to its storage decision. Hence, the second stage auction must be  run more frequently. Note that the frequency of the second stage auction can be changed depending on the change of the access request rates of some of the files. 

 Note that not all the access bids of the files stored in the cold or hot storage will be accepted. The acceptance depends on the access bids and the latency requirements. However, it is still useful for the user to participate in the first stage {\it i.e.}, paying a higher price to store its file in the cold or hot storage. This is because the user may have to access the file only for a certain number of hours in a day, the user can participate in the first stage auction where its files will be stored either in the cold storage or the hot storage at the start of the day. When the user needs to access the file, it bids in the second stage auction. Note that since both the cold storage and the hot storage have higher service rates as compared to the back-up storage, the users can access files at a much faster rates compared to the traditional back-up storage even if their access bids are not accepted at all.
 
 

\textcolor{black}{Also note that we have a back-up storage for all the files. Initially, all the files are stored in back-up storage. The users then bid in order to store their files slightly faster cold storage or the fastest hot storage.  After this fist-stage auction, the files that are accepted will be copied and moved to store in cold or hot storage. However, the rest of files will be stored in the back-up storage. If the user's storage bid is rejected, she will still be able to access those files from the back-up storage.  Our second stage bidding is only designed for premium data access, while a standard, basic service to access data is provided to all files stored in the system. If a user's access bid is not accepted by CSP in the second stage, she still be able to access the file from hot or cold storage. \emph{Thus, service availability is indeed guaranteed.} However, there will be no guarantee on the latency or the speed of accessing the files in the above two cases.}

\section{Problem Formulation}\label{sec:formulation}
In this section, we formally define the two-stage optimization problem. 

\subsection{First-Stage Decision}

In the first stage, the CSP decides -- i) whether to store a file or not, ii) if it decides to store the file whether to keep the duplicated copy of the file in the hot storage or cold storage (Original copy of an accepted file is always stored in cold storage).  {\footnote{\begin{color}{black} Note that we consider storing the first copy in cold storage due to its relatively low cost and large capacity, while the analysis and optimization remain the same if it is replaced by any other type of storage tier. \end{color}}} Let $I$ be the total number of files participate in the first-stage auction.

We consider a first price auction where the user pays the price it bids. This auction is run once in a day or once in a week. 
Let $A_{i}=1$ denote that the file $i=1,\ldots, I$ is accepted for storage; $A_i=0$ if it is not accepted. Let $R_{i}=1$  denote that the copy of the  file $i=1,\ldots, I$ is stored in the hot storage; otherwise, $R_i=0$. Note that if $A_{i}=0$, then the file is not stored anywhere, thus, $R_{i}=0$. However, if $A_{i}=1$, $R_{i}$ can be either $1$ or $0$. Nonetheless, if $R_{i}=1$,  $A_{i}$ must be $1$.

Also note that if $R_{i}=0$, there are two possibilities: (i) $A_i=1$, thus, both of the original and duplicated copies will be stored in cold storage (hence the number copies of file $i$ stored in the cold storage is $2$); or, (ii) the file storage bid is rejected ($A_i=0$). {\em Therefore, the number of copies of file $i$ stored in the hot storage and cold storage is $R_{i}$ and $2A_{i}-R_{i}$ respectively.}

Let $S_i$ be the size of the file $i$ and $C_j$ be the capacity of storage $j$, where $j=1$ denotes the cold storage and $j=2$ denotes the hot storage. Since the total stored files cannot exceed the capacity,
\begin{eqnarray}
& \sum\limits_{i=1}^{I}S_{i}(2A_{i}-R_{i})\leq C_{1}
\label{green-constraint-1}
\\
& \sum\limits_{i}S_{i}R_{i}\leq C_{2}
\label{green-constraint-2}
\end{eqnarray}
Since $A_i$ must be $1$ if $R_i$ is 1, 
\begin{eqnarray}
& A_{i}-R_{i}\geq 0,  \quad \forall i
\label{green-constraint-3}
\end{eqnarray}

\subsection{Second Stage Decision Problem}
After storing the files, the users bid for accessing the files in $T$ different time slots. While bidding, the user also gives the access request arrival rates and the latency requirements in each slot. This market is run on a shorter time scale ({\it e.g.}, the duration can be an hour or half an hour). The user can update its bid at different time slots depending on its requirements.

The access request arrival rates, the access bid prices, and the latency requirements are {\em random variables}, which are governed by the user's requirements. We assume that the random variables can be modeled by $K$ realizations of the random variables ( or $K$ scenarios).  \textcolor{black}{The decision is time-based. The second stage runs at epochs $t = 1, 2, ... (T-1), T,$ (e.g., every hour), and different scenarios, i.e., the user's bid, latency and arrival rate, can vary over these epochs. We take access decisions at each epoch based on the bids. The first stage runs at every $T$ periods, (e.g., every day), and we take storage decision by considering the possible scenarios and the associated probabilities across all the $T$ periods. }

\textcolor{black}
{ Workloads for accessing data follow some pattern \cite{di11} \cite{gon11} \cite{dan14}. However, the CSP is unaware of the exact joint distribution function of the bidding prices, access arrival request rates, and the latency requirements. However, in the scenario-based approach, we do not need to know ay specific distribution function. Specifically, we can generate the empirical distribution from the bidding history. For example, from Fig. \ref{fig0} we know that the first stage auction runs in a longer time scale (e.g., every day) and second stage auction runs in a shorter time scale (e.g., every hour). Then in the following day, the CSP can learn the (joint) empirical distribution of bid price, latency and arrival rate based on the access information from the last (few) day(s). Thus, our approach can be applied to any scenario where a workload pattern does not need to be learnt.} 

For each scenario $k=1,\ldots, K$, we denote the latency requirement of file $i$ as $l_i^{k}$, the access bid price as $q_i^{k}$, and the access request arrival rate as $\lambda_i^{k}$. The scenarios can be generated from the past history of the user's data. We assume that scenario $k$ occurs with probability $p^{k}$. {\em We do not put any restriction on the dependence of the bids, the latency bids, and the bids. Specifically, they can be obtained from a joint distribution.} However, the CSP is unaware of the distribution. It learns from the bidding history and updates the set of scenarios. 

\color{black}
\subsubsection{Access arrival rates}

\textcolor{black}{The access requests are independent and in a certain time slot, the number of these requests are integer and can be considered independent of the past requests. It is often assumed that the inter arrival time follows exponential distribution \cite{94}\cite{kristen15}\cite{gauri15}. Thus, we consider a Poisson arrival process. We use M/G/1 queuing model.{\footnote{\begin{color}{black} In this paper, we consider one disk for each hierarchy. Because the multiple servers will reduce the bandwidth of each server.  Thus, the capacity of serving requests from each server will be reduced.  As a result, the latency of each request will be increased.\end{color}}} M/M/1 or M/G/1 queuing model is also used in \cite{94}\cite{kristen15}\cite{gauri15}. }

\color{black}
\subsubsection{Access request acceptance}
The CSP decides whether to accept the bid of the access request of each and every file. Let $H_i^{k}$ denote the decision that whether the file $i$ is accepted in scenario $k\in \{1,\ldots, K\}$. $H_i^{k}=1$ indicates that the access bid is accepted; $H_i^{k}=0$ indicates that the access bid is rejected.

Note that when the second stage decision is taken, the first-stage decision variables $A_i$ and $R_i$ are known. If $A_i=0$, then $H_i^{k}=0$ for all $k\in \{1,\ldots, K\}$  since file $i$ is not stored in the cold or hot storage, then its access bid cannot be accepted. On the other hand if $A_i=1$, $H_i^{k}$ can be either $0$ or $1$. This is because even if $A_i=1$, it cannot be guaranteed that the access bid will be accepted in scenario $k$. The access bid will be accepted based on how much profit will be  made and whether the latency requirement can be satisfied by accepting the bid. Hence,
\begin{align}\label{green-constraint-8}
H_i^{k}\leq A_i \quad\forall i.
\end{align}


\subsubsection{Probabilistic Scheduling}
\textcolor{black}{Probabilistic Scheduling has been successfully applied in display ad allocation problem on the Internet \cite{hae11} and  high--aggregate bandwidth switches \cite{gia03}. Such a strategy has been also shown to be nearly optimal in cloud storage \cite{94}. }

Recall that if file $i$  is accepted for storage, the original copy would be stored in cold storage and the duplicated copy will be either stored in cold storage ($R_i=0)$ or in hot storage ($R_i=1$). As we have copies of a file in both hot and cold storage in the latter case ($R_i=1$), the CSP needs to decide where the file should be accessed according to its bidding price and latency requirement. In probabilistic scheduling, each request for file $i$ has a certain probability to be scheduled to each storage $j$. For the $k$-th scenario, we have to decide $0\leq \pi_{i,j}^{k}\leq 1$ which denotes the probability that the file $i$ will be fetched from storage $j$, $j\in \{1,2\}$ for the $k$-th scenario. Intuitively, $\pi_{i,j}^{k}$ denotes how often the file $i$ should be fetched from storage $j$ for scenario $k$. Needless to say, if $R_i=0$, then $\pi_{i,2}^{k}=0$ for all $k\in \{1,\ldots, K\}$. Hence,
\begin{align}\label{green-constraint-11}
0\leq\pi_{i,2}^{k}\leq R_{i}, \quad \forall i.
\end{align}

And $\pi_{i,j}^{k}=0$ for files which have not been accepted for access requests. Thus,
\begin{eqnarray}
&   \sum\limits_{j=1}^{2} \pi_{i,j}^{k}=H_{i}^{k}, \quad \forall i.
\label{green-constraint-9}
\end{eqnarray}

Recall that $\lambda_i^{k}$ denotes the access request arrival rate of file $i$ in scenario $k$ within a slot. Thus, the total expected file access request rates for file $i$  to storage $j$ in the $k$-th scenario within the slot is given by $\lambda_i^{k}\pi_{i,j}^{k}$. The total expected  file access request rate to storage $j$ must be less that the file service rate (in Mb/s) of storage $j$; otherwise, the queue length will be $\infty$ and the storage $j$ cannot handle requests. Hence,
\begin{eqnarray}
& \sum\limits_{i} \lambda_{i}^{k}\pi_{i,j}^{k}S_{i}<\mu_{j}, \quad \forall j.
\label{green-constraint-7}
\end{eqnarray}

\textcolor{black}{Our scheduling approach will be optimal in an expected sense. However, the scheduling approach may be sub-optimal for a given scenario. Obtaining an optimal deterministic schedule is a NP-hard problem in general for a given scenario.}


\subsubsection{Latency Analysis}
\begin{definition}\label{defn:latency}
Latency is the sum of the time a file access request spends in the queue for service (waiting time) and the service time.
\end{definition}
The users strictly prefer a low latency. Studies show that in internet application even $0.1$s increase in the latency can significantly reduce the profit \cite{latency}.
The latency for file $i$ will inherently depend on the probabilistic scheduling decision $\pi_{i,j}^{k}$, arrival rate $\lambda_{i}^{k}$, and the service rate of the storage $\mu_j$. \textcolor{black}{Given the same number of files with same sizes are being served, a file will spend less time for service because of the higher service rate of the hot storage compared to the cold storage. Thus the latency will be shorter in the hot storage.} However, a user may have to pay more for accessing.
In the following, we provide the expression for the expected latency of a file.  Before that, we introduce a notation which we use throughout.
\begin{definition}
Let $\bar{T}_i^{k}, k=1,\ldots, K$ denote the expected latency for file $i$ request at scenario $k\in \{1,\ldots,K\}$.
\end{definition}
Let $Q_{j}^{k}$ denote the waiting time at storage $j, j=1,2$ for scenario $k$. Recall that $\pi_{i,j}^{k}$ denotes the probability with which  file $i$ will be fetched from storage $j$ in scenario $k$. Hence, the expected waiting time for file $i$ at scenario $k$ is $\sum\limits_{j}\pi_{i,j}^{k}\mathrm{E}[Q_{j}^{k}]$.
 Recall that $\mu_j$ is the service rate in Mb/s for storage $j$. Since the size of the file $i$ is $S_i$ and the probability that the request for file $i$ will be sent to storage $j$ at scenario $k$ is $\pi_{i,j}^{k}$, the expected service time for file $i$ in scenario $k$ is
\begin{align}
\sum\limits_{j}\frac{\pi_{i,j}^{k}S_i}{\mu_j}
\end{align}
From Definition~\ref{defn:latency} we have

\begin{equation} \label{eq3}
\bar{T}^{k}_{i}= \sum\limits_{j}\pi_{i,j}^{k}\mathrm{E}[Q_{j}^{k}]+\sum\limits_{j}\frac{\pi_{i,j}^{k}S_i}{\mu_j}
\end{equation}

The next result characterizes $\mathrm{E}[Q_{j}^{k}]$.

\begin{theorem}\label{lm:waitingtime}
The mean waiting time at storage $j$ for scenario $k$, $\mathrm{E}[Q_{j}^{k}]$ is given as follows. 
\begin{equation} \label{eq4}
 \mathrm{E}[Q_{j}^{k}] = \frac{ \sum_{i}\lambda_{i}^{k}\pi_{i,j}^{k}S_{i}^{2}   }{ \mu_{j}(\mu_{j}-\sum_{i}\lambda_{i}^{k}\pi^{k}_{i,j}S_{i}   )  }
\end{equation}
\end{theorem}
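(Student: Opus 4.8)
The plan is to reduce the dynamics at each storage tier to a single M/G/1 queue and then invoke the Pollaczek--Khinchine mean-value formula. First I would establish that the requests served by storage $j$ in scenario $k$ arrive as a Poisson process of rate $\Lambda_j^k := \sum_i \lambda_i^k \pi_{i,j}^k$. Each file $i$ generates an independent Poisson stream of rate $\lambda_i^k$; probabilistic scheduling thins this stream independently, routing a given request to storage $j$ with probability $\pi_{i,j}^k$, so the substream into $j$ is Poisson of rate $\lambda_i^k \pi_{i,j}^k$; and the superposition of these independent Poisson substreams over all $i$ is again Poisson with the summed rate. This justifies treating node $j$ as a single server fed by one Poisson input.

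Second, I would identify the service-time distribution of this aggregate queue. By the thinning/superposition above, a generic request routed to $j$ is a request for file $i$ with probability $\lambda_i^k \pi_{i,j}^k / \Lambda_j^k$, and serving it means transferring $S_i$ units of data at rate $\mu_j$, i.e.\ a service time $S_i/\mu_j$. Hence the service time $X$ is a size-weighted mixture whose first two moments are
\begin{equation*}
\mathrm{E}[X] = \frac{1}{\Lambda_j^k}\sum_i \lambda_i^k \pi_{i,j}^k \frac{S_i}{\mu_j}, \qquad \mathrm{E}[X^2] = \frac{1}{\Lambda_j^k}\sum_i \lambda_i^k \pi_{i,j}^k \frac{S_i^2}{\mu_j^2}.
\end{equation*}
The utilization is then $\rho_j^k = \Lambda_j^k \mathrm{E}[X] = \tfrac{1}{\mu_j}\sum_i \lambda_i^k \pi_{i,j}^k S_i$, and constraint \eqref{green-constraint-7} guarantees $\rho_j^k < 1$, so the queue is stable and the denominator in \eqref{eq4} is strictly positive.

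Third, I would apply the Pollaczek--Khinchine formula $\mathrm{E}[Q_j^k] = \Lambda_j^k \mathrm{E}[X^2]/\bigl(2(1-\rho_j^k)\bigr)$. Substituting the two moments, the factors of $\Lambda_j^k$ cancel, the $\mu_j^{-2}$ in $\mathrm{E}[X^2]$ combines with $(1-\rho_j^k) = (\mu_j - \sum_i \lambda_i^k \pi_{i,j}^k S_i)/\mu_j$, and a short simplification collects the numerator into $\sum_i \lambda_i^k \pi_{i,j}^k S_i^2$ and the denominator into $\mu_j(\mu_j - \sum_i \lambda_i^k \pi_{i,j}^k S_i)$, which is \eqref{eq4} (up to the standard Pollaczek--Khinchine prefactor of one half, which I would reconcile with the normalization adopted in \eqref{eq4}).

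The genuinely load-bearing step is the second one: the reduction to a single M/G/1 server with a correctly identified service-time mixture. Everything hinges on (a) the superposition/thinning argument that preserves the Poisson property of the aggregate arrivals, and (b) the observation that conditioning on which file a request targets makes the \emph{second} moment of service pick up the $S_i^2$ weighting that distinguishes \eqref{eq4} from a naive per-file calculation. Once these are in place, the Pollaczek--Khinchine substitution and the algebra are routine; I would also verify that $S_i$, $\mu_j$, and $\lambda_i^k$ are in consistent units so that $S_i/\mu_j$ is a legitimate service time.
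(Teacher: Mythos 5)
Your overall route is the same as the paper's: thin and superpose the per-file Poisson streams to obtain a Poisson input of rate $\Lambda_j^k=\sum_i\lambda_i^k\pi_{i,j}^k$ at storage $j$, identify the service-time mixture induced by probabilistic scheduling, and apply the Pollaczek--Khinchine mean-value formula. The arrival-process argument and your first moment $\mathrm{E}[X]=\frac{1}{\Lambda_j^k}\sum_i\lambda_i^k\pi_{i,j}^k S_i/\mu_j$ agree with the paper's Appendix. However, there is a genuine gap at exactly the step you flag as load-bearing: the second moment of service. You model the service time of a file-$i$ request as the \emph{deterministic} quantity $S_i/\mu_j$, giving $\mathrm{E}[X^2]=\frac{1}{\Lambda_j^k}\sum_i\lambda_i^k\pi_{i,j}^k S_i^2/\mu_j^2$. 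Substituting into Pollaczek--Khinchine yields
\[
\mathrm{E}[Q_j^k]=\frac{\Lambda_j^k\,\mathrm{E}[X^2]}{2\bigl(1-\rho_j^k\bigr)}
=\frac{\sum_i\lambda_i^k\pi_{i,j}^kS_i^2}{2\,\mu_j\bigl(\mu_j-\sum_i\lambda_i^k\pi_{i,j}^kS_i\bigr)},
\]
which is \emph{half} the value asserted in the theorem. This is not a normalization mismatch that can be ``reconciled'': the prefactor $\tfrac{1}{2}$ is an intrinsic part of the Pollaczek--Khinchine formula, and under a deterministic per-file service model the theorem as stated is simply false by a factor of two.

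The missing ingredient is the paper's service model (Appendix, Lemma~\ref{lm:statistics}): the per-MB service time $V_j^t$ at storage $j$ is an \emph{exponential} random variable with mean $1/\mu_j$, and the service time of a file-$i$ request is $S_iV_j^t$, i.e.\ exponential with mean $S_i/\mu_j$. Then $\mathrm{E}[X^2\mid \text{file }i]=S_i^2\,\mathrm{E}[(V_j^t)^2]=2S_i^2/\mu_j^2$, so the mixture second moment is $\mathrm{E}[(X_j^k)^2]=\frac{2}{\Lambda_j^k\mu_j^2}\sum_i\lambda_i^k\pi_{i,j}^kS_i^2$, carrying an extra factor of $2$ relative to your computation. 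That $2$ exactly cancels the $\tfrac12$ in Pollaczek--Khinchine and produces the claimed expression $\mathrm{E}[Q_j^k]=\sum_i\lambda_i^k\pi_{i,j}^kS_i^2\,/\,\bigl[\mu_j(\mu_j-\sum_i\lambda_i^k\pi_{i,j}^kS_i)\bigr]$. With the exponential per-MB service assumption in place, the remainder of your argument goes through verbatim and coincides with the paper's proof; without it, the proof does not establish the stated formula.
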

\begin{proof}
In order to simplify notations, we introduce three auxiliary functions: $ f= \sum_{i}\lambda_{i}^{k}\pi_{i,j}^{k}S_{i}  $, $ g= \sum_{i}\lambda_{i}^{k}\pi_{i,j}^{k}   $, and $ h= \sum_{i}\lambda_{i}^{k}\pi_{i,j}^{k}S_{i}^{2}  $.

Using the moments of the service time given in Appendix \ref{apdx:moments}, we have  $\Lambda_{j}^{k}=g$, $ \mathrm{E}[X_{j}^{k}]=\frac{f}{\mu_{j}g}$, and $\mathrm{E}[(X_{j}^{k})^{2}]= \frac{2h}{ \mu^{2}g }$.  Using Pollaczek-Khinchin formula for M/G/1 queues \cite{chan1997pollaczek},  we have $\mathrm{E}[Q_{j}^{k}]=\frac{ h  }{ \mu_{j}(\mu_{j}-f   )  }$. Expanding the terms, we get the result as in the statement of the Theorem. 
\end{proof}
%
Using Lemma~\ref{lm:statistics} in (\ref{eq3}) (which can be found in Appendix), we have
\begin{equation} \label{eq5}
\bar{T}_{i}^{k}= \sum\limits_{j}\frac{\pi_{i,j}^{k}S_{i}}{\mu_{j} }+ \sum\limits_{j} \pi_{i,j}^{k}  \left( \frac{ \sum\limits_{i}\lambda_{i}^{k}\pi^{k}_{i,j}S_{i}^{2}   }{ \mu_{j}(\mu_{j}-\sum\limits_{i}\lambda_{i}^{k}\pi^{k}_{i,j}S_{i}  )   }\right)
\end{equation}
Note that by differentiating twice one can easily discern that $\bar{T}_i^{k}$ is convex in each $\pi_{i,j}^{k}$. However,   $\bar{T}_{i}^{k}$ is jointly {\em non-convex} in $\pi_{i,j}^{k}$. This is because of the terms $\pi_{i,1}^{k}\pi_{i,2}^{k}$, which is not jointly convex in $\pi_{i,1}^{k}$ and $\pi_{i,2}^{k}$.

Note that the latency depends on the file size $S_i$: if the file size is large, the latency will be large.  Thus, it shows that for the same access bid the files of smaller sizes will be preferred (given that its latency requirement is satisfied) as it will allow the CSP to accept more access requests. Also note that if $\lambda_i^{k}\pi_{i,j}^{k}$ is large for some $j$, then the latency again increases, hence, the latency of storage facility $j$ increases if too many requests are directed towards $j$. Thus, the CSP has to judiciously select $\pi_{i,j}^{k}$. If a large number of requests are directed towards the hot storage, the latency requirement may not be satisfied which may decrease the CSP's profit. Also note that $\bar{T}_i^{k}=0$ if the file is not accepted for accessing. Recall that the latency requirement for file $i$ in scenario $k$ is $l_i^{k}$. Hence, we must have
\begin{align}\label{green-constraint-6}
\bar{T}_i^{k}\leq l_i^{k}
\end{align}

\subsubsection{Second Stage Optimization Problem}
The second stage profit of the CSP if the scenario $k\in \{1,\ldots, K\}$ is realized is given by
\begin{align}
\sum\limits_{i}q_i^{k}H_i^{k}
\end{align}
Recall that $q_i^{k}$ is the access bid for file $i$ in scenario $k$. Hence, the second stage optimization problem if scenario $k$ is realized is given by
\begin{align}
& {\text{(P2) }} \nonumber\\
& {\text{ max}}
& & \sum\limits_{i}q_i^{k}H_i^{k}\nonumber\\
& \text{s.t. }
& &  (\ref{green-constraint-8}), (\ref{green-constraint-11}), (\ref{green-constraint-9}),(\ref{green-constraint-7}),(\ref{eq5}), (\ref{green-constraint-6})\nonumber\\
& & &H_i^{k}\in \{0,1\} \quad \forall i \label{eq:h}\\
& & &\text{var }: H_i^{k},\pi_{i,j}^{k}
\end{align}

Note that if a user bids high for access, but its size is large or the arrival rate is high, then the latency (\ref{eq5}) may increase and the CSP will lose the profit as the CSP may satisfy only few requirements of latencies. \textcolor{black}{Problem (P2) is a integer nonlinear program, which is not trivial to get solved. }Hence, it is not apriori clear that how the CPS should select the access bids.
\subsection{Deterministic Equivalent Program}
Now, we formally formulate the first-stage stochastic program. Let $P_i$ be the bid price of file $i$ for storage. \textcolor{black}{Let $c_1$ and $c_2$ denote the total cost incurred by the service provider for storing a file $i$ the hot and cold storage respectively.} Recall that $j=1$ ($j=2$, resp.) denotes that the storage is cold (hot, resp.). Hence, the profit obtained by the CSP for {\em storage} is
\begin{align}
\sum\limits_{i} P_{i}A_{i}-\sum\limits_{i}S_{i}(2A_{i}-R_{i})c_{1}-\sum\limits_{i}S_{i}R_{i}c_{2}
\end{align}
Since the second stage decision variables inherently depend on the first stage and the CSP wants to maximize the total profit, thus, the CSP needs to consider the second stage decision while taking the first stage decision. Hence, the first stage decision problem is different from the standard knapsack problem.

Note that the CSP knows that the access bid price for file $i$ in scenario $k$ is $q_i^{k}$. Recall that the probability with which scenario $k$ is generated is $p^{k}$. Hence, the expected profit from the second stage decision is
\begin{align}
T\sum\limits_{k=1}^{K}\sum\limits_{i=1}^{I}p^{k}q_i^{k}H_i^{k}
\end{align}
$T$ is the total number of slots where the access auctions are run. In the first stage, the CSP wants to maximize the total expected profit. However, the expected profit also depends on the second stage decision variables. Therefore, we should find $H_i^{k}$ and $\pi_{i,j}^{k}$ for each possible scenario. We formulate the first-stage decision problem as the so-called {\em deterministic equivalent program}  \cite{equivalent_deterministic} in the following:
\begin{align}
& {\text{(P1) }} \nonumber\\
 & \textrm{min}& &\sum\limits_{i} P_{i}A_{i}-\sum\limits_{i}S_{i}(2A_{i}-R_{i})c_{1}-\sum\limits_{i}S_{i}R_{i}c_{2} \\
& && +  T\sum\limits_{k}\sum\limits_{i}p^{k}q_i^{k}H_i^{k}\\
&\textrm{s.t.} & & (\ref{green-constraint-1})-(\ref{green-constraint-3}), A_i\in \{0,1\}, R_i\in \{0,1\}\nonumber\\
&  && \bar{ T_{i}}^{k} \leq l_{i}^{k}, \quad \forall i, \quad \forall k \label{P2-2}\\
& & & \sum\limits_{i} \lambda_{i}^{k}\pi_{i,j}^{k}S_{i}<\mu_{j}, \quad \forall j, \quad \forall k \label{P2-3}\\
& & &  H_{i}^{k}\leq A_{i}, \quad \forall i, \quad \forall k \label{P2-4}\\
& & & \sum\limits_{j} \pi_{i,j}^{k}=H_{i}^{k}, \quad\forall i,  \quad \forall k \label{P2-5} \\
& & &   0\leq\pi_{i,2}^{k}\leq R_{i}, \quad\forall i, \quad \forall k \label{P2-7}\\
& & &  H_{i}^{k} \in \{0,1\}, \quad\forall i, \quad \forall k \label{P2-8}\\
& & &  \text{var: }  A_i, R_i, \pi_{i,j}^{k}, H_i^{k}
\end{align}

Note that the constraints in (\ref{P2-2})-(\ref{P2-8}) are for the second stage decisions. Also note that though we solve for $\pi_{i,j}^{k}$ and $H_i^{k}$, the decision variables are of interest in the first stage, which are $A_i$ and $R_i$. After $A_i$ and $R_i$ are decided, the optimization problem (P2) is solved if scenario $k$ is realized.  In the deterministic equivalent program, the number of scenarios $K$ may be very large which increases the number of constraints and the decision space. One remedy is to discard those scenarios which occur with very low probability.

\textcolor{black}{The CSP is unaware of a specific scenario in the first stage. Thus, we consider that the CSP will decide whether to accept a bid, and storing the file in the cold or hot while maximizing the expected revenue over all the scenarios that can generate in the second stage. Note that in the second stage, the CSP is aware of the bids of the users. Thus, the CSP is aware of the specific scenario while taking the second stage decision. }

\color{black}
\begin{theorem}\label{nonconvex}
Problem (P1) and (P2) are non-convex. 
\end{theorem}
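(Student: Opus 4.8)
The plan is to prove non-convexity by showing that the feasible region of each program is non-convex; since the two objectives are affine in their decision variables (linear in $A_i,R_i,H_i^k$, and the scheduling variables do not appear in the objectives at all), a minimization over a non-convex feasible set is by definition a non-convex program, and this is all that needs to be established for both (P1) and (P2).

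I would present the argument in two layers. The first, immediate layer is integrality: (P2) imposes $H_i^k\in\{0,1\}$ via (\ref{eq:h}), while (P1) additionally imposes $A_i,R_i\in\{0,1\}$ together with (\ref{P2-8}). As long as the instance admits at least two distinct feasible $0/1$ assignments --- for example, rejecting every file versus accepting a single small file whose latency bound (\ref{green-constraint-6}) is satisfiable --- the feasible set contains two binary points but not their midpoint, whose coordinates would be fractional; hence the set is not convex. This settles the theorem, but only through the generic non-convexity of any binary program.

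The essential layer, which I would emphasize because it survives the usual continuous relaxation (and therefore rules out solving a convex relaxation exactly), is the latency constraint. Relaxing all binaries to $[0,1]$, every remaining constraint is affine: the capacity bounds (\ref{green-constraint-1})--(\ref{green-constraint-3}), the stability bound (\ref{green-constraint-7})/(\ref{P2-3}), and the coupling and box constraints (\ref{green-constraint-8}), (\ref{green-constraint-9}), (\ref{green-constraint-11}). The sole remaining source of non-convexity is (\ref{green-constraint-6})/(\ref{P2-2}) with $\bar T_i^k$ as in (\ref{eq5}). Here I would isolate the waiting-time contribution $\pi_{i,j}^k\,\mathrm{E}[Q_j^k]$: writing the affine quantities $f=\sum_i\lambda_i^k\pi_{i,j}^kS_i$ and $h=\sum_i\lambda_i^k\pi_{i,j}^kS_i^2$, this term equals $\pi_{i,j}^k\,h/\!\left[\mu_j(\mu_j-f)\right]$, and a second-order computation on an appropriate two-dimensional slice of the scheduling variables yields an indefinite Hessian, recovering the joint non-convexity of $\bar T_i^k$ already noted after (\ref{eq5}).

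The final and hardest step is to upgrade ``$\bar T_i^k$ is a non-convex function'' to ``the sub-level set $\{\pi:\bar T_i^k\le l_i^k\}$ is a non-convex set,'' since a non-convex function can still have convex sub-level sets. The main obstacle I anticipate is that the naive blow-up mechanism fails: the load $f=\sum_i\lambda_i^k\pi_{i,j}^kS_i$ is affine, so if both endpoints satisfy the stability bound (\ref{green-constraint-7}) their midpoint does too, and the denominator $\mu_j-f$ cannot be driven to zero by convex combination. The certificate must therefore come from the curvature of the ratio itself, i.e.\ from the non-quasiconvexity of $\bar T_i^k$ along a segment that trades scheduling load between files on a tier (or between tiers for one file). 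Concretely, I would fix one scenario and a small instance and exhibit two scheduling vectors $\pi^{(1)},\pi^{(2)}$, each feasible with $\bar T_i^k=l_i^k$, whose midpoint gives $\bar T_i^k>l_i^k$; pinning down explicit parameters $(\lambda_i^k,S_i,\mu_j,l_i^k)$ for which this interior bulge exceeds the common endpoint value, for both (P1) and (P2), is the delicate part and completes the proof.
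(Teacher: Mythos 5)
Your proposal is correct and takes essentially the same route as the paper, whose proof consists of exactly your two ingredients: the binary variables $A_i$, $R_i$, $H_i^{k}$, and the term $\pi_{i,1}^{k}\pi_{i,2}^{k}$ in $\bar{T}_i^{k}$ that destroys joint convexity. You are in fact more careful than the paper: your integrality layer alone (two feasible binary points whose fractional midpoint is infeasible, under a linear objective) already proves the theorem, while the delicate sub-level-set step you flag --- needed to make the latency-based argument rigorous, since a non-convex function can still have convex sub-level sets --- is left unaddressed by the paper, which simply asserts the function's non-convexity and stops there.
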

\begin{proof}
First the decision variables $A_{i}$, $R_{i}$ and $H_{i}^{k}$ are binary, which make the problem non-convex. 
Second, $\bar{T}_{i}^{k}$ (cf.(11)) has term $\pi_{i,1}^{k}\pi_{i,2}^{k}$, which is not jointly convex in $\pi_{i,1}^{k}$ and $\pi_{i,2}^{k}$.
\end{proof}

\color{black}

Hence, standard convex optimization solvers such as CVX, MOSEK or integer linear programming optimization solvers such as CPLEX cannot be used.

\textcolor{black}{Problem (P1) is the first stage problem and (P2) is the second stage problem. Note that while solving the first stage problem (P1) the CSP needs to consider the second stage parameters-- the access bids, the arrival rates, and the latency requirements. This is because the second stage optimal decision (and thus, the optimal profit) inherently depends on the first stage decisions. Thus, the CSP needs to consider the second stage decision while taking the first stage optimal decisions. We consider a scenario based approach where we optimize the expected profit over all the scenarios and obtain the first stage decision variables. In the second stage, the CSP optimizes (P2) for a specific scenario which has been realized. Note that while solving the second stage problem, the first stage decisions are known.}

\section{Solution Methodology}\label{sec:algo}
\textcolor{black}{\subsection{Discussion of Computational Complexity}}
\textcolor{black}{We now demonstrate the computational complexity of the original problem with an example. Suppose we have 10 users and 1 time period with 3 scenarios, so here we have 50 binary decision variables (10$A$, 10$R$, 10$H^{1}$, 10$H^{2}$ and 10$H^{3}$). The total number of branches in a decision tree is $(2^{10})^{5}=2^{50}$, which increases exponentially with the increase of number of users, time periods and numbers of scenarios. Thus, the problem scale is very large even with a small number of users and scenarios. In addition, the latency constraint is nonlinear, which makes our problem even harder as we cannot use MILP. }

\subsection{Integer Relaxation}
Problem (P1) and (P2) are non-convex as the variables $A_i, R_i$ and $H_i^{k}$ are binary. If we relax the binary constraints, the rest of the problem will still be non-convex as the latency function $\bar{T}_i^{k}$( cf.(\ref{eq5})) is still jointly non-convex in $\pi^{k}_{i,j}$. However, if we relax the integer constraint then the objective function and the constraints will be differentiable. We can use the solver such as CONOPT  \cite{conopt} to find a locally optimal solution. CONOPT is generally used for smooth continuous functions.  It finds the solution which satisfies the KKT conditions. If the gradient is non-linear, it will be approximated via Taylor series up to the first order term. 

However, if we relax the integer constraint, the solution may not be integer, rather a value in the interval $(0,1)$. To eliminate those solutions, we need to add a penalty function which will put high penalty ($-\infty$ for optimality) when the solution is not either $0$ or $1$ and $0$ penalty when the solution is indeed $0$ or $1$ (Fig.~\ref{fig:penalty}).

\begin{figure*}
\begin{minipage}{0.35\linewidth}
\includegraphics[width=\textwidth]{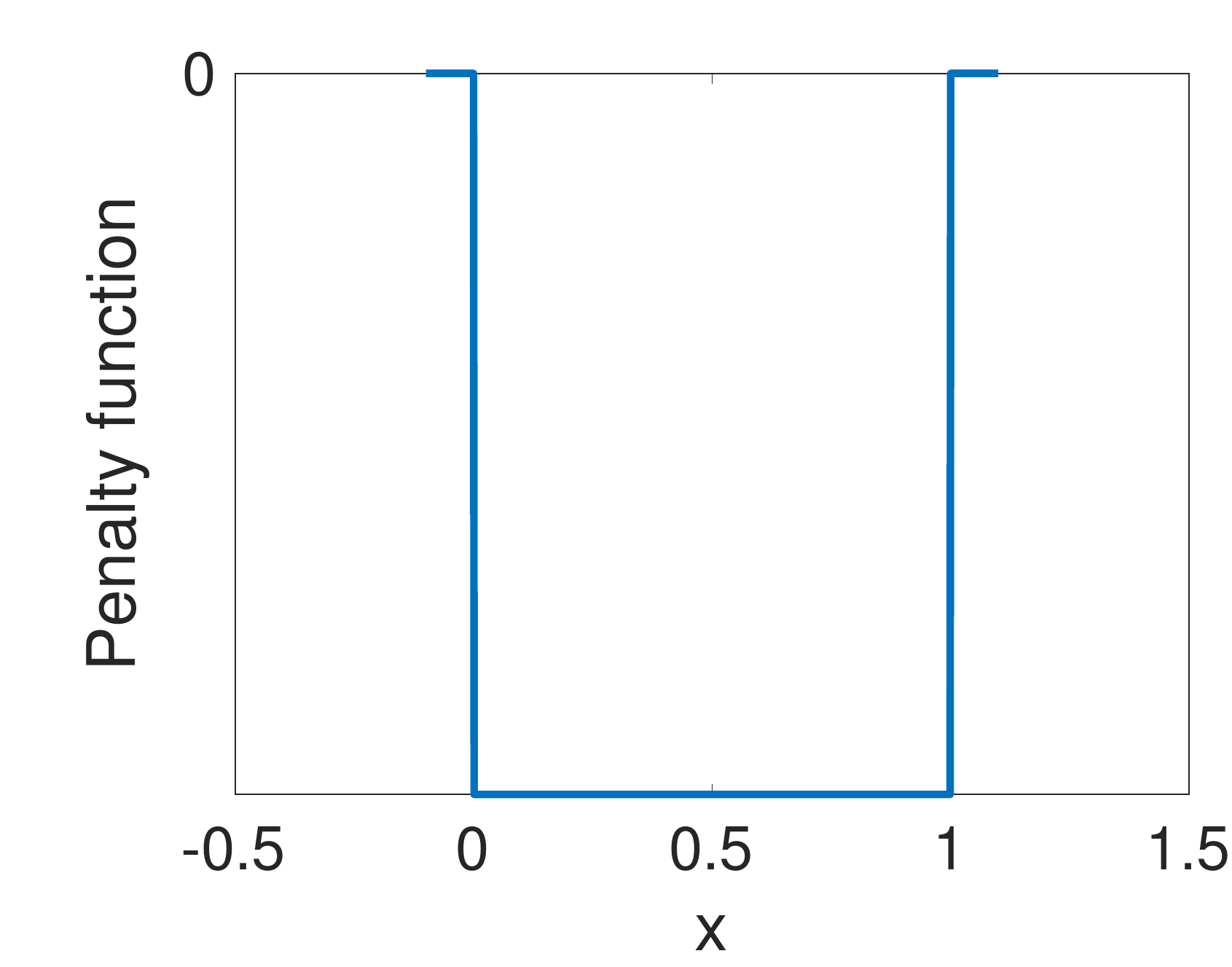}
\vspace{-0.2in}
\caption{Desired Penalty function}
\label{fig:penalty}
\vspace{-0.2in}
\end{minipage}\hfill
\begin{minipage}{0.6\linewidth}
\includegraphics[trim=0in .1in 0in .5in, clip, width=\textwidth, height=2in]{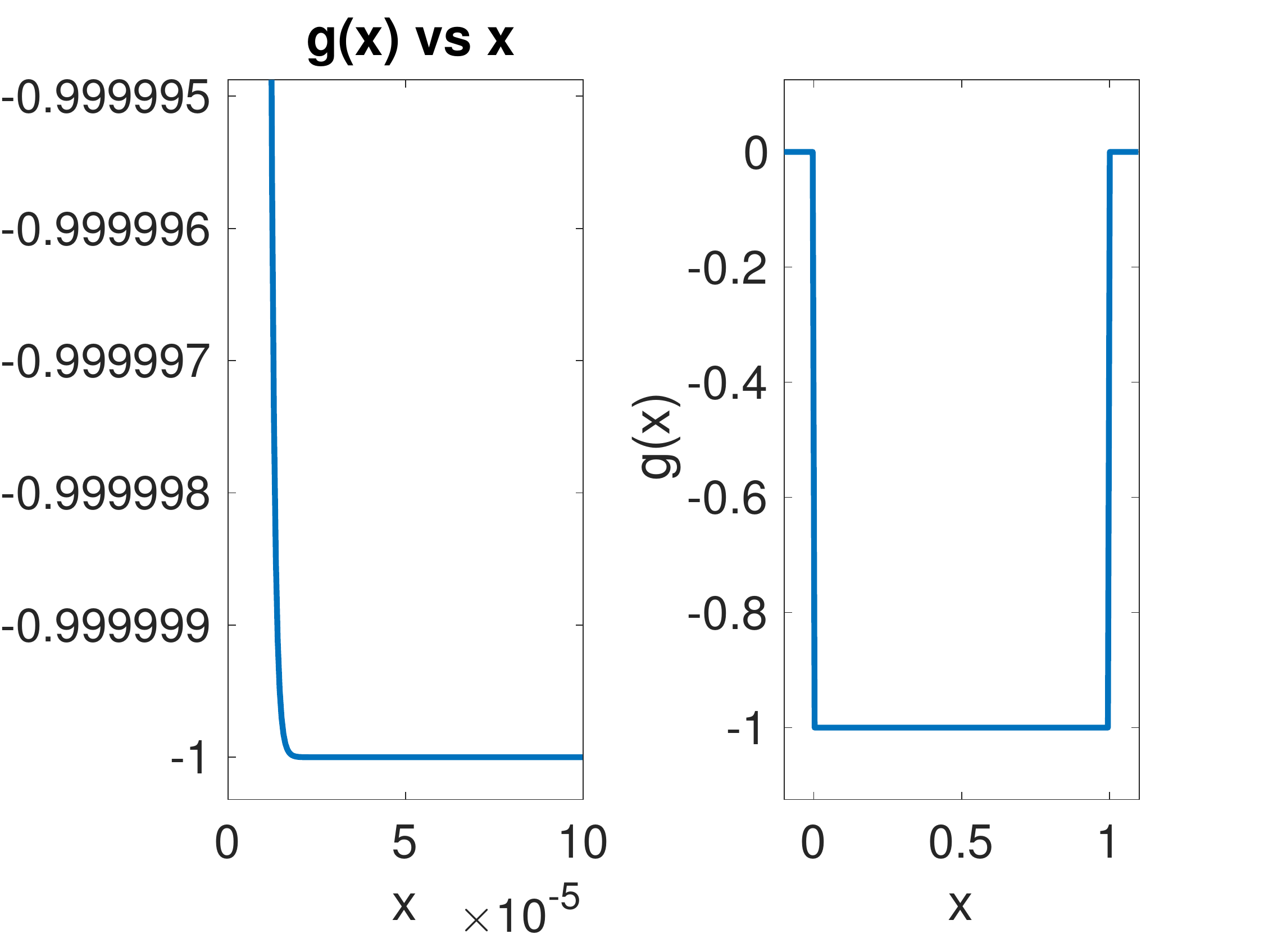}
\vspace{-0.2in}
\caption{Penalty function $g(x)$ (cf.(\ref{eq:pen})) for $\alpha=10^6$. In the left figure $x\in [0,10^{-4}]$ and in the right hand figure $x\in [-0.1,1.1]$.}
\label{fig:alpha}
\vspace{-0.2in}
\end{minipage}
\end{figure*}
Sigmoid function\footnote{$S(t)=1/(1+\exp(-t))$} is a S-shaped function which can closely approximate the step function $U(\cdot)$.   Since we have to put zero penalty when the solution is $0$ or $1$ and a high penalty when it is in between, thus, we consider the following function
\begin{align}\label{eq:pen}
g(x)=\dfrac{1}{1+\exp(\alpha x)}-\dfrac{1}{1+\exp(\alpha (x-1))}.
\end{align}

Further, let 
\begin{equation}
g_1(x)= g(x)+\frac{1}{2} - \dfrac{1}{1+\exp(\alpha)}.
\end{equation}

Fig.~\ref{fig:alpha}  shows that $g(x)$ becomes close to $-1$ at the value $\dfrac{1}{\alpha}$. Fig.~\ref{fig:alpha} also shows that  for $\alpha=10^6$, $g(x)$  closely matches the penalty function that we desire. The function $g_1(x)$ shifts the penalty function to have zero value on the desired extremes and a negative value in the desired range.  The next result shows that $g_1(0)=g_1(1)=0$. 

\begin{lemma}
The value of function $g_1(x)$ is zero for $x=0$ and $x=1$, or 	$g_1(0)=g_1(1) = 0$. 
\end{lemma}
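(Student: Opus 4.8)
The plan is to verify the two claimed equalities by direct substitution of $x=0$ and $x=1$ into the definitions of $g$ and $g_1$; no structural machinery is needed, and the only slightly non-obvious ingredient is a symmetry identity for the logistic function.

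I would first dispose of the case $x=1$, which is immediate. Substituting into $g$ gives $g(1)=\frac{1}{1+\exp(\alpha)}-\frac{1}{1+\exp(0)}=\frac{1}{1+\exp(\alpha)}-\frac{1}{2}$, and then adding the constant offset $\frac{1}{2}-\frac{1}{1+\exp(\alpha)}$ appearing in the definition of $g_1$ cancels both surviving terms, so $g_1(1)=0$.

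For $x=0$ the substitution yields $g(0)=\frac{1}{2}-\frac{1}{1+\exp(-\alpha)}$, so that
\begin{equation}
g_1(0)=1-\left(\frac{1}{1+\exp(-\alpha)}+\frac{1}{1+\exp(\alpha)}\right).
\end{equation}
Here lies the one step that requires any thought: the logistic complementarity identity $\frac{1}{1+\exp(-\alpha)}+\frac{1}{1+\exp(\alpha)}=1$, which I would establish by rewriting $\frac{1}{1+\exp(\alpha)}=\frac{\exp(-\alpha)}{1+\exp(-\alpha)}$ and combining over the common denominator $1+\exp(-\alpha)$. Applying it gives $g_1(0)=1-1=0$, completing the argument. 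In truth there is no real obstacle: the statement is a two-line computation whose only subtlety is recognizing this complementarity of the sigmoid evaluated at $\pm\alpha$, rather than attempting to expand the exponentials or take any limit in $\alpha$.
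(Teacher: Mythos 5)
Your proof is correct and takes essentially the same route as the paper: direct substitution into $g_1$, with the $x=1$ case canceling immediately and the $x=0$ case settled by the identity $\frac{1}{1+\exp(\alpha)}=\frac{\exp(-\alpha)}{1+\exp(-\alpha)}$, which you package as logistic complementarity and the paper uses in exactly the same form. No gaps; nothing further is needed.
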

\begin{proof}
	\begin{eqnarray}
	g_1(0)&=& \dfrac{1}{1+1}-\dfrac{1}{1+\exp(-\alpha)} + \frac{1}{2}  - \dfrac{1}{1+\exp(\alpha)}\\
	&=& \dfrac{\exp(-\alpha)}{1+\exp(-\alpha)} - \dfrac{1}{1+\exp(\alpha)}\\
	&=&0.
	\end{eqnarray}
	
Similarly, 
	\begin{eqnarray}
g_1(1)&=& \dfrac{1}{1+\exp(\alpha)}-\dfrac{1}{1+1} + \frac{1}{2} - \dfrac{1}{1+\exp(\alpha)}\\
&=& 0.
\end{eqnarray}

\end{proof}

Thus, we note that as  the  function $g_1(x)$  gives $0$ penalty when $x$ is $0$ or $1$. For any $0<x<1$, $g_1(x)\to -1/2$ as $\alpha\rightarrow \infty$. Further, even for finite $\alpha$, we note that $g_1(x)<0$ for $0<x<1$. Thus, for large $\alpha$, this function it will match the ideal penalty function. Note we do not have to lose any differentiability property as $g(\cdot)$ is differentiable. With this penalty function our problem reduces as follows.
\begin{align}
& {\text{(P3) }} \nonumber\\
& {\text{ max}}
& & \sum\limits_{i} P_{i}A_{i}-\sum\limits_{i}S_{i}(2A_{i}-R_{i})c_{1}-\sum\limits_{i}S_{i}R_{i}c_{2}\nonumber    \\
& &  & +  T \sum\limits_{i}\sum\limits_{k} p^{k} q_{i}^{k}H_{i}^{k} +\sum\limits_{i} C(g_1(A_i)+g_1(R_i)) \nonumber \\
& & &+\sum\limits_{i}\sum\limits_{k}Cg_1(H^{k}_i)
\label{P4} \\
& \text{s.t. } & &  (\ref{green-constraint-1})-(\ref{green-constraint-3}), (\ref{P2-2})-(\ref{P2-8})\nonumber\\
& & & 0\leq A_i\leq 1, 0\leq R_i\leq 1\nonumber\\
& & & 0\leq H_i^{k}\leq 1.
\end{align}

$C$ is the weight corresponding to the penalty functions. Note that the solution will be integer if $C\rightarrow \infty$.  
$A_i$ and $R_i$ are decided by solving (P3). After $A_i$ and $R_i$ are solved for a given realization $k$, the second stage decisions are taken. In the second stage, the following optimization problem is solved
\begin{align}
& {\text{(P4) }} \nonumber\\
& {\text{ max}}
& & \sum\limits_{i}(q_i^{k}H_i^{k}+Cg_1(H_i^{k}))\nonumber   \\
& \text{s.t. }
& &  (\ref{green-constraint-8}), (\ref{green-constraint-11}), (\ref{green-constraint-9}),(\ref{green-constraint-7}),(\ref{eq5}), (\ref{green-constraint-6})\nonumber\\
& & & 0\leq H_i^{k}\leq 1.
\end{align}

The decision variables are $H_i^{k}$ and $\pi_{i,j}^{k}$. Note that we do not need to decide $A_i$ and $R_i$ in the second stage, hence, we do need constraints (\ref{green-constraint-1})-(\ref{green-constraint-3}). Note that the solution $H_i^{k}$ will be optimal and integer if $C\rightarrow \infty$. 
Since the problem is non-convex, we cannot guarantee that the solution obtained by CONOPT will be optimal. However, we can infer the following if we find an optimal solution
\begin{proposition}\label{prop:optimal}
The optimal solution of the relaxed problem (i.e. (P3), (P4)) is also the optimal solution of the original problem (i.e. (P1), (P2)) as $C\rightarrow \infty$.
\end{proposition}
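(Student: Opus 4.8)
The plan is to read (P3)/(P4) as an \emph{exact penalty} reformulation of (P1)/(P2) and to prove, by a monotonicity-plus-squeeze argument, that the penalized optima converge to an integer optimizer of the original problem. Write the common decision vector as $\mathbf{x}=(A,R,\pi,H)$, let $F(\mathbf{x})$ denote the storage-plus-expected-access objective that (P1) and (P3) share, and collect the penalty as $P(\mathbf{x})=\sum_i\left(g_1(A_i)+g_1(R_i)\right)+\sum_{i,k}g_1(H_i^{k})$, so that the (P3) objective is $F(\mathbf{x})+C\,P(\mathbf{x})$. Let $\mathcal{F}$ be the feasible set of the relaxation, i.e. (\ref{green-constraint-1})--(\ref{green-constraint-3}) and (\ref{P2-2})--(\ref{P2-8}) with the box relaxations $0\le A_i,R_i,H_i^{k}\le1$, and let $Z^\ast$ be the optimal value of (P1). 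The two starting observations are: (i) by the preceding Lemma together with the stated sign property $g_1(x)<0$ for $0<x<1$, one has $P(\mathbf{x})\le0$ on $\mathcal{F}$ with equality \emph{iff} every coordinate of $\mathbf{x}$ lies in $\{0,1\}$; and (ii) on those integer points the remaining constraints of $\mathcal{F}$ reduce exactly to the constraints of (P1)/(P2), so the zero-penalty feasible points of the relaxation are precisely the feasible points of the original problem, where the two objectives coincide.

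First I would record two easy consequences. Evaluating (P3) at an integer optimizer of (P1)---the integer-feasible set is nonempty since the all-reject point $A=R=H=0,\ \pi=0$ is always feasible---gives penalty zero and objective $Z^\ast$, so the optimal value $V(C)$ of (P3) satisfies $V(C)\ge Z^\ast$ for every $C$; moreover $V(C)$ is non-increasing in $C$, because $C'P(\mathbf{x})\le C\,P(\mathbf{x})$ whenever $C'>C$ and $P(\mathbf{x})\le0$. Let $M=\sup_{\mathbf{x}\in\mathcal{F}}F(\mathbf{x})<\infty$, finite since $F$ is affine and $\mathcal{F}$ is bounded. Writing $\mathbf{x}_C$ for a maximizer of (P3), the chain $Z^\ast\le V(C)=F(\mathbf{x}_C)+C\,P(\mathbf{x}_C)\le M+C\,P(\mathbf{x}_C)$ rearranges to $0\ge P(\mathbf{x}_C)\ge(Z^\ast-M)/C$, so $P(\mathbf{x}_C)\to0$. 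Passing (by compactness) to a limit point $\bar{\mathbf{x}}$ of $\{\mathbf{x}_C\}$, continuity of $P$ gives $P(\bar{\mathbf{x}})=0$, hence $\bar{\mathbf{x}}$ is integer and feasible for (P1) and so $F(\bar{\mathbf{x}})\le Z^\ast$. Conversely $F(\mathbf{x}_C)=V(C)-C\,P(\mathbf{x}_C)\ge V(C)\ge Z^\ast$, so $F(\bar{\mathbf{x}})\ge Z^\ast$ by continuity; therefore $F(\bar{\mathbf{x}})=Z^\ast$ and $\bar{\mathbf{x}}$ is a global optimizer of (P1). The same inequalities squeeze $V(C)$ between $Z^\ast$ and $F(\mathbf{x}_C)\to Z^\ast$, and the monotonicity of $V$ upgrades this to $V(C)\to Z^\ast$. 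Applying the identical reasoning scenario-by-scenario establishes the claim for (P4) against (P2).

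The hard part is not the linear scaling of the penalty but the fact that fractional points lying arbitrarily close to an integer vertex incur an arbitrarily small penalty, so one cannot argue that a \emph{fixed} fractional competitor is driven to $-\infty$ uniformly; the resolution is precisely the two-sided bound above, which forces $P(\mathbf{x}_C)\to0$ and thereby pins the optimizers to the integer-feasible set in the limit. A second, more technical point to dispatch is the compactness of $\mathcal{F}$: constraint (\ref{green-constraint-7}) is a strict inequality and the latency expression (\ref{eq5}) has a pole as $\sum_i\lambda_i^{k}\pi_{i,j}^{k}S_i\uparrow\mu_j$, but the latency bound (\ref{green-constraint-6}), $\bar{T}_i^{k}\le l_i^{k}<\infty$, keeps every feasible $\pi$ bounded away from that pole, so the effective region is closed and bounded and the maxima and limit points invoked above are well defined. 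For the stated conclusion it suffices that every limit point of $\mathbf{x}_C$ solves (P1) and that $V(C)\to Z^\ast$, which is exactly what this argument delivers.
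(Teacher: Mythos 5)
Your proof is correct, but there is nothing in the paper to compare it against: the paper states Proposition~\ref{prop:optimal} with no proof at all (the text moves directly on to the discussion of recovering feasible solutions), so your argument supplies what the authors omitted. What you give is the standard exact-penalty/limiting argument, and it earns its keep precisely where an informal justification would fail. First, the tempting one-line claim --- ``any fractional point incurs penalty near $-C/2$ and is therefore driven to $-\infty$'' --- is only valid pointwise, not uniformly: since $g_1$ is continuous and vanishes at $0$ and $1$, maximizers of (P3) may remain fractional for every finite $C$ while creeping toward an integer vertex; your two-sided bound $0\ge P(\mathbf{x}_C)\ge (Z^\ast-M)/C$ followed by extraction of a limit point is the correct repair, and you correctly flag this as the crux. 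Second, you are right that existence of the maximizers $\mathbf{x}_C$ and of their limit points is a genuine issue because (\ref{green-constraint-7}) is a strict inequality, and your resolution via the latency constraint (\ref{green-constraint-6}) is the right one; this is, however, the one step I would ask you to expand, since you need the bound away from the pole to be \emph{uniform} over the feasible set, not merely to hold at each feasible point: for any feasible point, the file with the largest $\pi_{i,j}^{k}$ satisfies $\pi_{i,j}^{k}\ge \bigl(\sum_{i'}\lambda_{i'}^{k}\pi_{i',j}^{k}S_{i'}\bigr)/\bigl(\sum_{i'}\lambda_{i'}^{k}S_{i'}\bigr)$, and inserting this into its latency constraint gives a quadratic inequality that bounds the load on storage $j$ by a constant strictly below $\mu_j$ depending only on the problem data, which yields the compact effective region you invoke. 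Two cosmetic nits: equality $P(\mathbf{x})=0$ holds iff the \emph{penalized} coordinates $(A,R,H)$ are integral --- the $\pi$'s are not penalized and need not be binary, which is in fact what you use when concluding feasibility for (P1), where $\pi$ remains a continuous scheduling probability; and the ``min'' in the paper's statement of (P1) is an evident typo for ``max,'' which you silently and correctly read as maximization. Your reliance on $g_1(x)<0$ for $0<x<1$ is sound: the paper asserts it, and it indeed holds for every finite $\alpha>0$.
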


\subsection{Feasible solution from the relaxed problem}
When $C$ is $\infty$, both the first-stage and second-stage decision solutions $A_i, R_i$, and $H_i^{k}$  will be integers.   If $\alpha\rightarrow \infty$ $g(x)$ will match the ideal penalty function. However, in practice, neither $C$ nor $\alpha$ can be set at $\infty$. Hence, we may find a solution which is not feasible, {\it i.e.} it is not either $0$ or $1$. Note that setting $\alpha$ alone to the very high value will not make the solution integer. One also has to make $C$ high to give larger penalty to the fractional solution.  However, $C$ has to be larger for smaller $\alpha$. In the following, we discuss how to find the feasible solution for  finite $C$ and $\alpha$. 

Also note that if $C$ is very high, in an optimal solution the solution will only be away from the integral solution by a nominal amount. One can then convert the non-integral solution of either $H^{k}_i, A_i, R_i$ to the nearest integer. However, the above does not guarantee that the capacity constraints or the latency requirements will be satisfied.  For example, consider that  in a solution of the relaxed problem $A_i=1-\epsilon$, where $\epsilon>0$ is very small, and $A^{r}_i=1$ is the nearest integer solution to the relaxed problem. However, if $\sum_{i}(2A_i-R_i)S_i=C_1$, then $\sum_{i}(2A_i^{r}-R_i)S_i>C_1$ which violates the constraint in (\ref{green-constraint-1}).  
Thus, simple converting the solution $A_i, R_i, H_i^{k}$ of the relaxed problem to the nearest integer may not give a feasible solution. However, in the following, we provide a strategy which can guarantee that even if the solution of the relaxed problem is converted to the nearest integer, then, it will not violate the original constraint. 
\begin{proposition}
For every $C$ and $\alpha$, there exists an $1>\epsilon>0$ such that if $C_j=C_j(1-\epsilon)$ and $\mu_j=\mu_j(1-\epsilon)$, such that if the solution $A_i, R_i, H^{k}_i$ of the relaxed problem ({\it i.e.}, (P3), (P4)) is converted to the nearest integer (if the value is $0.5$, it will be converted to $0$) then they will be feasible solution of the original problem ({\it i.e.}, (P1), (P2)).
\end{proposition}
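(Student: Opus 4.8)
The plan is to treat $\epsilon$ as the amount of artificial slack we inject into the resource constraints, and to show that this slack absorbs exactly the error introduced by rounding the relaxed solution. Fix $C$ and $\alpha$, solve (P3)/(P4) after replacing $C_j$ by $C_j(1-\epsilon)$ and $\mu_j$ by $\mu_j(1-\epsilon)$, and call the continuous optimum $(\tilde A_i,\tilde R_i,\tilde H_i^k,\tilde\pi_{i,j}^k)$. The first step is to control the rounding error: because the penalty weight $C$ and steepness $\alpha$ are fixed and large, the shape of $g_1$ forces every $\tilde A_i,\tilde R_i,\tilde H_i^k$ to lie within some $\delta=\delta(C,\alpha)$ of $\{0,1\}$, with $\delta\to 0$ as $C,\alpha\to\infty$. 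I would make this bound uniform in $\epsilon$ so that the argument does not become circular; decoupling $\delta$ from $\epsilon$ is one delicate point. Let $A_i^r,R_i^r,H_i^{k,r}$ denote the nearest integers.

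Second, the two combinatorial constraints (\ref{green-constraint-3}) and (\ref{green-constraint-8}) survive rounding for free: nearest-integer rounding (ties sent to $0$) is monotone, so $\tilde A_i\ge\tilde R_i$ gives $A_i^r\ge R_i^r$ and $\tilde H_i^k\le\tilde A_i$ gives $H_i^{k,r}\le A_i^r$. For the linear capacity constraints (\ref{green-constraint-1})--(\ref{green-constraint-2}), each term $2A_i-R_i$ and $R_i$ changes by at most $O(\delta)$, so the left-hand sides grow by at most a constant multiple of $\delta\sum_i S_i$; since the relaxed solution already meets the tightened bound $C_j(1-\epsilon)$, the rounded solution meets the true bound $C_j$ as soon as $\epsilon C_j$ exceeds this $O(\delta\sum_i S_i)$ perturbation.

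Third, and this is where the $\mu_j(1-\epsilon)$ tightening does its work, I would rebuild a consistent schedule $\pi^{r}$ from the rounded decisions and bound the extra load it places on each tier. For files with $R_i^r=0$ set $\pi_{i,2}^{k,r}=0$ and $\pi_{i,1}^{k,r}=H_i^{k,r}$, which restores (\ref{green-constraint-11}) and (\ref{green-constraint-9}); for the remaining files rescale $\tilde\pi$ so the masses sum to $H_i^{k,r}$. The key observation is that every reassigned or added mass is itself $O(\delta)$: a request migrates off the hot tier only when $\tilde R_i\le\delta$ (hence $\tilde\pi_{i,2}^k\le\delta$), and $\tilde H_i^k$ moves to $1$ only when it was already within $\delta$ of $1$. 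Thus the aggregate load $\sum_i\lambda_i^k\pi_{i,j}^{k,r}S_i$ on tier $j$ exceeds the tightened load $\sum_i\lambda_i^k\tilde\pi_{i,j}^kS_i<\mu_j(1-\epsilon)$ by at most $O(\delta\sum_i\lambda_i^kS_i)$, so choosing $\epsilon$ with $\epsilon\mu_j$ dominating this perturbation re-establishes (\ref{green-constraint-7}) and keeps each load strictly below $\mu_j$.

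Finally I would handle the latency constraint (\ref{green-constraint-6}), which I expect to be the main obstacle because $\bar T_i^k$ (cf. (\ref{eq5}), Theorem~\ref{lm:waitingtime}) is nonlinear, non-convex, and blows up as a tier load approaches $\mu_j$. The mechanism is that $\bar T_i^k$ is continuous, increasing in each tier load, and decreasing in $\mu_j$; the relaxed solution meets $\bar T_i^k\le l_i^k$ computed with the smaller rate $\mu_j(1-\epsilon)$, which is the \emph{stricter} requirement, so restoring the true $\mu_j$ only lowers the latency. Since the previous step keeps every load bounded away from $\mu_j$, $\bar T_i^k$ is Lipschitz there and the $O(\delta)$ load perturbation changes it by a controlled amount, so for $\delta$ (hence $\epsilon$) small the latency stays below $l_i^k$. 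Collecting the resulting inequalities on $\epsilon$ (one per capacity tier, one per service tier, one for latency) and taking their maximum yields a single $\epsilon$; the remaining task is to verify that this maximum is strictly below $1$, which holds precisely because $\delta=\delta(C,\alpha)$ can be driven small while the constants $\sum_iS_i$, $\sum_i\lambda_i^kS_i$, $C_j$, $\mu_j$ stay fixed. I would flag this final $\epsilon<1$ check, together with the uniform-in-$\epsilon$ bound on $\delta$, as the two places where genuine care is needed.
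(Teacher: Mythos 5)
You are following the same route the paper itself takes: the paper in fact offers no formal proof of this proposition, only the two-sentence remark that tightening $C_j$ and $\mu_j$ to $C_j(1-\epsilon)$, $\mu_j(1-\epsilon)$ yields a ``restricted problem'' whose rounded solutions cannot violate the original constraints, plus the observation that the needed $\epsilon$ shrinks as $C\rightarrow\infty$. Your proposal is that intuition made quantitative, and most of your blocks are sound: nearest-integer rounding with ties sent to $0$ is monotone and so preserves $A_i\geq R_i$ and $H_i^k\leq A_i$; the $O(\delta)$ accounting for the capacity and load constraints is the right mechanism; and the observation that restoring the true $\mu_j$ can only decrease $\bar T_i^k$ for a fixed schedule is correct and is exactly what the $\mu_j(1-\epsilon)$ tightening buys.

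The genuine gap is the point you flag and then wave away. Every inequality you collect has the form $\epsilon\geq c\,\delta(C,\alpha)$ with $c$ determined by the fixed problem data (e.g.\ $\epsilon C_1\gtrsim\delta\sum_i S_i$ for cold-storage capacity, $\epsilon\mu_j\gtrsim\delta\sum_i\lambda_i^k S_i$ for the loads), so an admissible $\epsilon<1$ exists only if $\delta(C,\alpha)<1/c$. But the proposition quantifies over \emph{every} $C$ and $\alpha$, so $\delta(C,\alpha)$ is a fixed number: it cannot be ``driven small,'' and your closing claim that it can contradicts the hypothesis. Moreover, the near-integrality bound itself comes from comparing the penalty depth, which is at most roughly $C/2$ per variable, against the largest objective gain $G$ that a fractional variable can produce; when $C\leq 2G$ this comparison is vacuous and the optimizer of (P3) may legitimately sit at fractional values near $1/2$. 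Concretely, with $C$ small and $\sum_i S_i < C_1 < 2\sum_i S_i$, an instance with binding tightened capacity can have many $\tilde A_i$ just above $1/2$ with $\tilde R_i\approx 0$; rounding then inflates $\sum_i S_i(2A_i^r-R_i^r)$ from about $\sum_i S_i$ to about $2\sum_i S_i$, a violation that no $\epsilon<1$ can absorb. So what your argument actually establishes is the weaker statement ``for all sufficiently large $C$ and $\alpha$ there exists $\epsilon\in(0,1)$'' --- the regime of the paper's own numerics ($\alpha=10^6$, $C=10^9$) --- not the stated quantifier. Your other flagged worry, uniformity of $\delta$ in $\epsilon$, is real but resolvable: setting all variables of a single file to zero is feasible in every tightened problem (all constraints are one-sided in the relevant direction), so the penalty-versus-gain comparison defining $\delta$ never involves $\epsilon$.
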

Intuitively, if we make $C_j=C_j(1-\epsilon)$ and $\mu_j=\mu_j(1-\epsilon)$, we solve a restricted problem. Thus, even when we convert the non-integer solutions of the relaxed problem to the nearest integers we will not violate the original constraints.  
  Note that if $C$ is very large, we need a very small $\epsilon$ as the solutions of the relaxed problem $A_i, R_i$ and $H^{k}_i$ will be close to the integers. As $C\rightarrow \infty$, $\epsilon\rightarrow 0$.  $C$ is also larger if $\alpha$ is low. In our numerical results, we set $\epsilon$ as $0.001$, $\alpha$ as $10^6$, and $C$ as $10^9$ which gives the feasible solutions as mentioned in the above proposition.

\section{Numerical Studies}\label{sec:simulation}

\subsection{Simulation Setting}
To validate our proposed policy and evaluate its performance, we implement the following numerical studies. Unless stated otherwise, we consider a setting where there are 1,000 files,  and the number of slots for the second-stage auction is $T=20$.  
%
The capacities of cold and hot storage are 400 and 200 GB respectively. We consider five types of files: $Ty^{\Rmnum {1}}$, $Ty^{\Rmnum {2}}$, $Ty^{\Rmnum {3}}$, $Ty^{\Rmnum {4}}$ and $Ty^{\Rmnum {5}}$,  which are  of sizes  64, 128, 256, 512 and 1024 MB respectively.  In the first stage, customers will bid for storage. Bidding prices for storage per MB are considered to be a random variable $\sim \mathcal{U}[0.1,0.3]$ i.e., it is uniformly distributed with a mean of 0.2 cents.  Thus, the  bid $P_{i}$ for file $i$ is distributed as $\sim S_{i}*\mathcal{U}[0.1, 0.3]$. For example, if there is a 64 MB file and the realized price is 0.25 cents for each MB\footnote{Note that in Amazon they put $2.5$\$ for each GB.}, the bidding price to store this file is $64*0.25=16$ cents.

We consider $K=10$ different scenarios for the second-stage parameter. Specifically, we generate $10$ different instances of access request arrival rates, access bids, and the latency requirements. We consider that $\lambda_i^{k}$ is generated independently according to the mean $20, 10, 8, 4,$ and $2$ per hour for the file sizes of  64, 128, 256, 512 and 1024 MB respectively. This is in accordance with the practice as the smaller size files are accessed more frequently.  We assume that the latency requirements $l_{i}^{k}$ are related to the file sizes. Specifically, we generate $l_{i}^{k}$ independently according to the distribution $\sim \mathcal{U}[30+\frac{S_{i}}{5*10^6},30+\frac{S_{i}}{10^6}]$ in milliseconds. \cite{lz04} shows that the utility is in general convex in the latency and concave in the arrival rates.  In this paper, we consider that $q_{i}^{k}=\frac{50S_{i} \log(\lambda_{i}^{k}+1)}{(l_{i}^{k})^{2}} $. The parameters are described in Table~\ref{my-label}. After generating the $K$ scenarios we compute the empirical distribution to find the number of times a scenario $k$ (prob. $p^{k}$) occurs out of the $10$ events.  Then scenario is randomly generated among the $K$ scenarios where $k$-th scenario occurs with probability $p^{k}$.


\begin{table*}[ht]
\centering
\caption{Bidding Scenario $k$ Generation}
\label{my-label}
\begin{tabular}{@{}cclccc@{}}
\toprule
               & \begin{tabular}[c]{@{}c@{}}File Size\\ $S_{i}$\end{tabular} & \begin{tabular}[c]{@{}l@{}}Bidding Price \\ for Storage\\ $P_{i}$\end{tabular} & \begin{tabular}[c]{@{}c@{}}Mean of \\ Arrival Rate \\ $\lambda_{i}^{k}$\end{tabular} & \begin{tabular}[c]{@{}c@{}}Latency \\ Requirements\\  $l_{i}^{k}$\end{tabular}&\begin{tabular}[c]{@{}c@{}}Bidding Price \\ for Access \\ $q_{i}^{k}$\end{tabular}                                     \\  \midrule
$Ty^{\Rmnum{1}}$ & 64        & $64*\mathcal{U}[0.1,0.3]$           & 20                                     & $\mathcal{U}[30+\frac{64}{5*10^6},30+\frac{64}{10^6}]$              & $\frac{50*64* \log(\lambda_{i}^{k}+1)}{(l_{i}^{k})^{2}}$   \\
$Ty^{\Rmnum{2}}$ & 128       & $128*\mathcal{U}[0.1,0.3]$          & 10                                     & $\mathcal{U}[30+\frac{128}{5*10^6},30+\frac{128}{10^6}]$              & $\frac{50*128* \log(\lambda_{i}^{k}+1)}{(l_{i}^{k})^{2}}$  \\
$Ty^{\Rmnum{3}}$ & 256       & $256*\mathcal{U}[0.1,0.3]$          & 8                                      & $\mathcal{U}[30+\frac{256}{5*10^6},30+\frac{256}{10^6}]$             & $\frac{50*256* \log(\lambda_{i}^{k}+1)}{(l_{i}^{k})^{2}}$  \\
$Ty^{\Rmnum{4}}$ & 512       & $512*\mathcal{U}[0.1,0.3]$          & 4                                      & $\mathcal{U}[30+\frac{512}{5*10^6},30+\frac{512}{10^6}]$             & $\frac{50*512* \log(\lambda_{i}^{k}+1)}{(l_{i}^{k})^{2}}$  \\
$Ty^{\Rmnum{5}}$ & 1024      & $1024*\mathcal{U}[0.1,0.3]$         & 2                                      & $\mathcal{U}[30+\frac{1024}{5*10^6},30+\frac{1024}{10^6}]$            & $\frac{50*1024*\log(\lambda_{i}^{k}+1)}{(l_{i}^{k})^{2}}$ \\ \bottomrule
\end{tabular}
\end{table*}

Based on the above specifications, we compare the performances of the proposed method (PM) with three other  methods, which are described as follows.  We consider $\alpha$ as $10^6$ and $C$ as $10^9$ in (P3) and (P4). The factor by which we reduce $C_j$ and  $\mu_j$ is choisen to be $\epsilon=0.001$. The solution obtained by the relaxed problem and the proposed method are almost the same. Thus, we do not show the solution of the relaxed problem. 

\begin{itemize}
\item IS: Problem with Two Independent Stages:
\begin{itemize}
\item Solve the first-stage  problem without considering the second-stage recourse decisions to get the first-stage solution $A_i$ and $R_i$ for each $i$.
\item Given the first-stage solution solve for the realized scenario, {\it i.e.} solve the second stage optimization problem (P2). We again solve the relaxed version (P4) and then find the optimal solution according to Proposition 4.2 as described in our proposed method

\end{itemize}
\item GH I: Greedy Heuristic Based On $q_{i}^{k}/S_{i}$:
\begin{itemize}
\item Solve the first-stage  problem without considering the second-stage recourse decisions to get the first-stage solution $A_i$ and $R_i$ for each $i$.
\item In the second stage, we sort the bids based on $q_{i}^{k}/S_{i}$ in the descending order. We keep accepting bids according to the sorted order as long as the realized the latency requirements are met.
\end{itemize}
\item GH II: Greedy Heuristic Based On $q_{i}^{k}/ \lambda_{i}^{k}$:
\begin{itemize}
\item Solve the first-stage  problem without considering the second-stage recourse decisions to get the first-stage solution $A_i$ and $R_i$ for each $i$.
\item In the second stage, we sort the bids based on $q_{i}^{k}/\lambda_i^{k}$ in the descending order if scenario $k$ is realized. We keep accepting bids according to the sorted order as long as the realized the latency requirements are met.
\end{itemize}
\end{itemize}
Profit in each algorithm is considered to be the sum of the first-stage and second-stage profits. 
Note that all the above mentioned base-line algorithms do not solve the first-stage decision problem by considering the second-stage recourse decision. Algorithm IS solves the second-stage decision problem given the solution of the first-stage decision. However, GH I and GH II are greedy heuristics which accept bids according to some heuristics in order to lower the complexity of finding the optimal solution of the second-stage decision problem. Intuitively, recall from (\ref{eq5}) that the latency of a file  in scenario $k$ inherently depends on the access request arrival rates and file size. Specifically, the latency increases as the file size increases or the access request arrival rate increases. Hence, the CSP should prefer the bids which give more profit per unit of the size and the per unit of the access request arrival rate. GH I greedily prefer the bids which pay more per unit of size. On the other hand, GH II strictly prefers the bid which pays more for per unit of access request rate.  
Before discussing the results, we introduce a notation which we use throughout this section.
\begin{align}  \label{ass}
 &Access Request Acceptance Rate (ARAR) \nonumber\\ 
&= \frac{\text{Total Number Of Accessed Files}  }{ \text{ Total Number of Requests} }
\end{align}

The above metric shows how much bids are accepted in the second stage among the bids that are accepted in the first stage. This will give an idea pertaining the fairness of the process. 
In each of the result,  each algorithm is run $100$ times and an average is taken for the  profit and ARAR over these runs.
\subsection{Impact of Storage Capacity}

\begin{figure}[ht]
\centering

\subfigure[Impact of the variation of cold storage capacity on profits]{
   \includegraphics[scale =0.21] {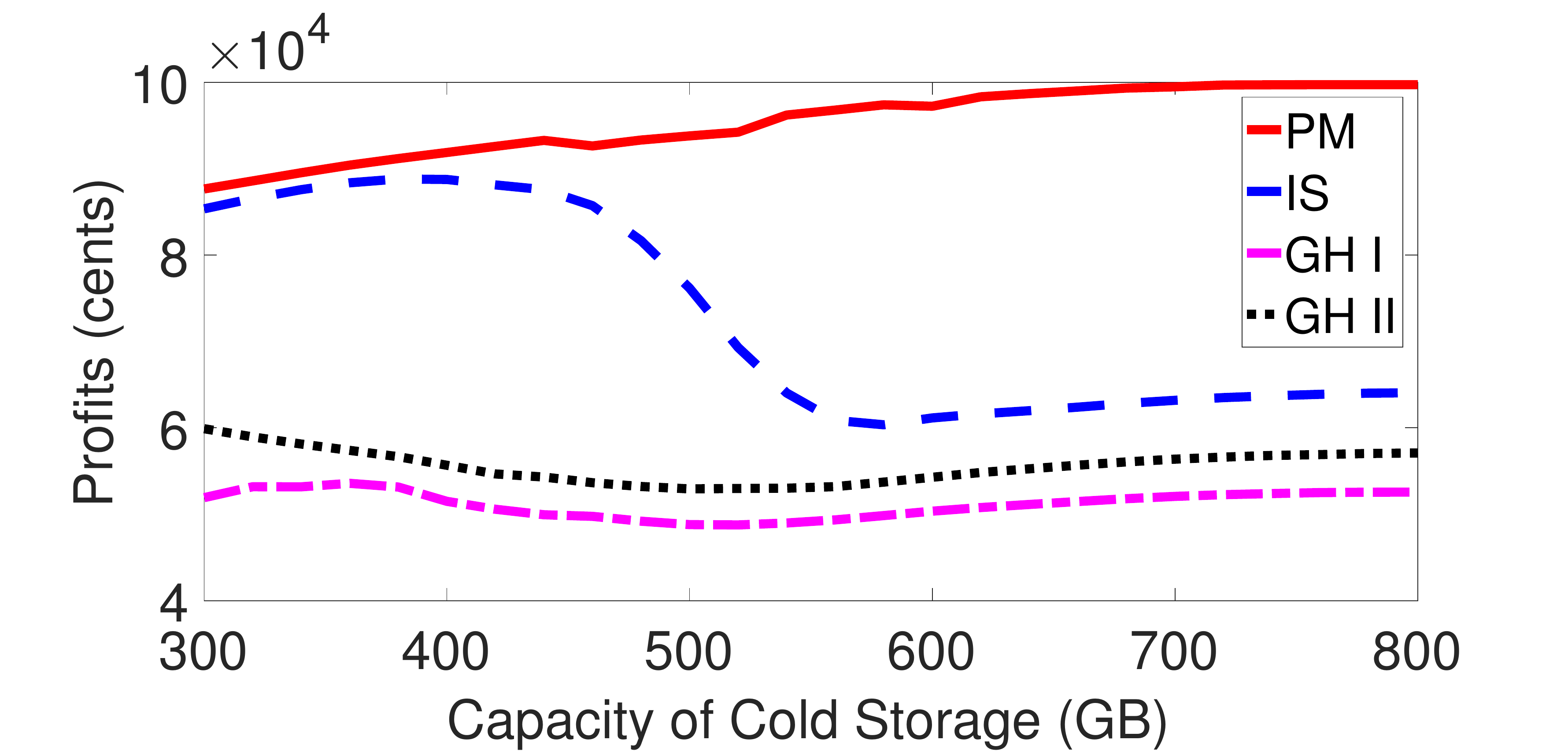}
   \label{fig:1.1}
 }
 \subfigure[Impact of the variation of cold storage capacity on profits obtained from file storage and access in PM]{
   \includegraphics[scale =0.21] {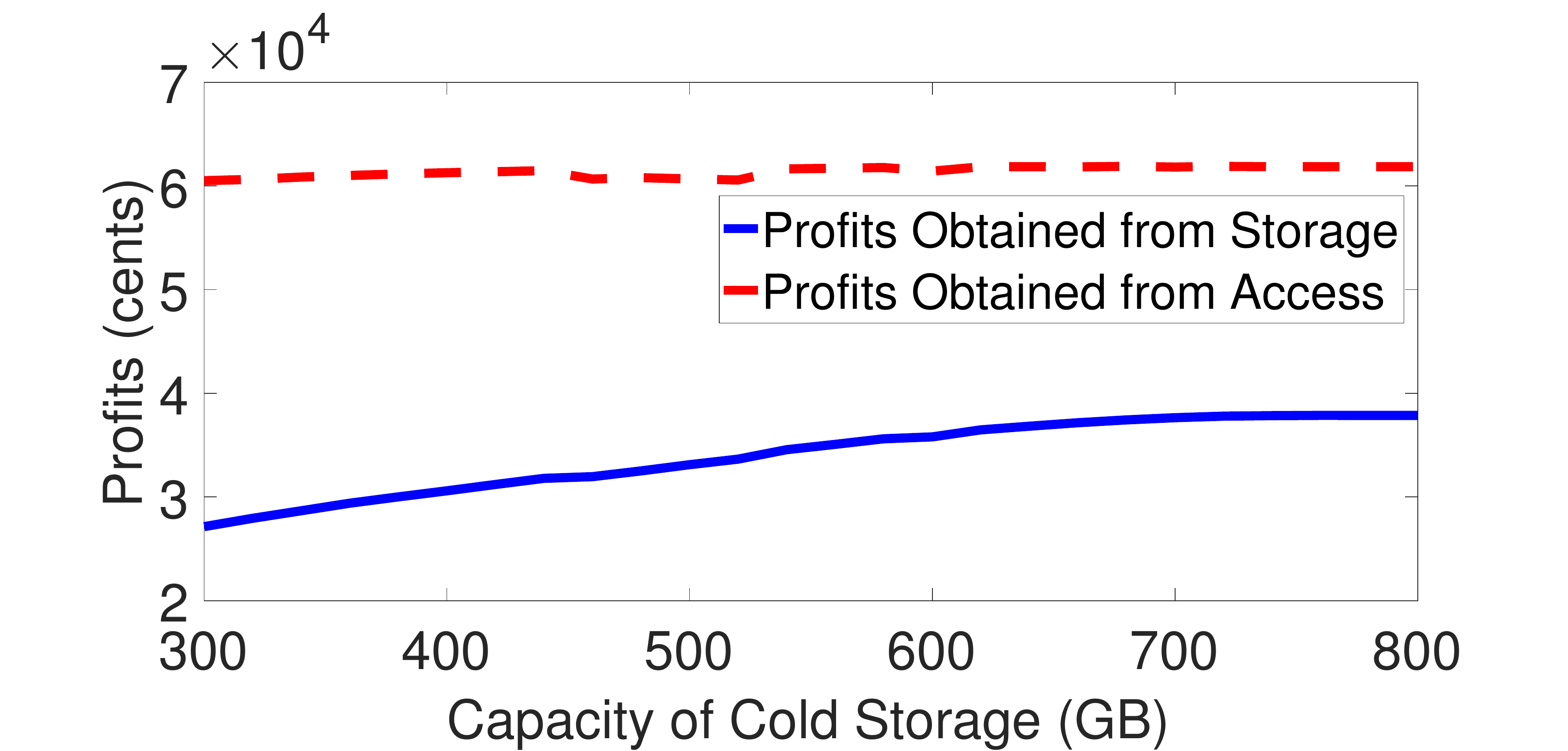}
   \label{fig:1.2}
 }
 \subfigure[Impact of the cold Storage capacity on ARAR]{
   \includegraphics[scale =0.21] {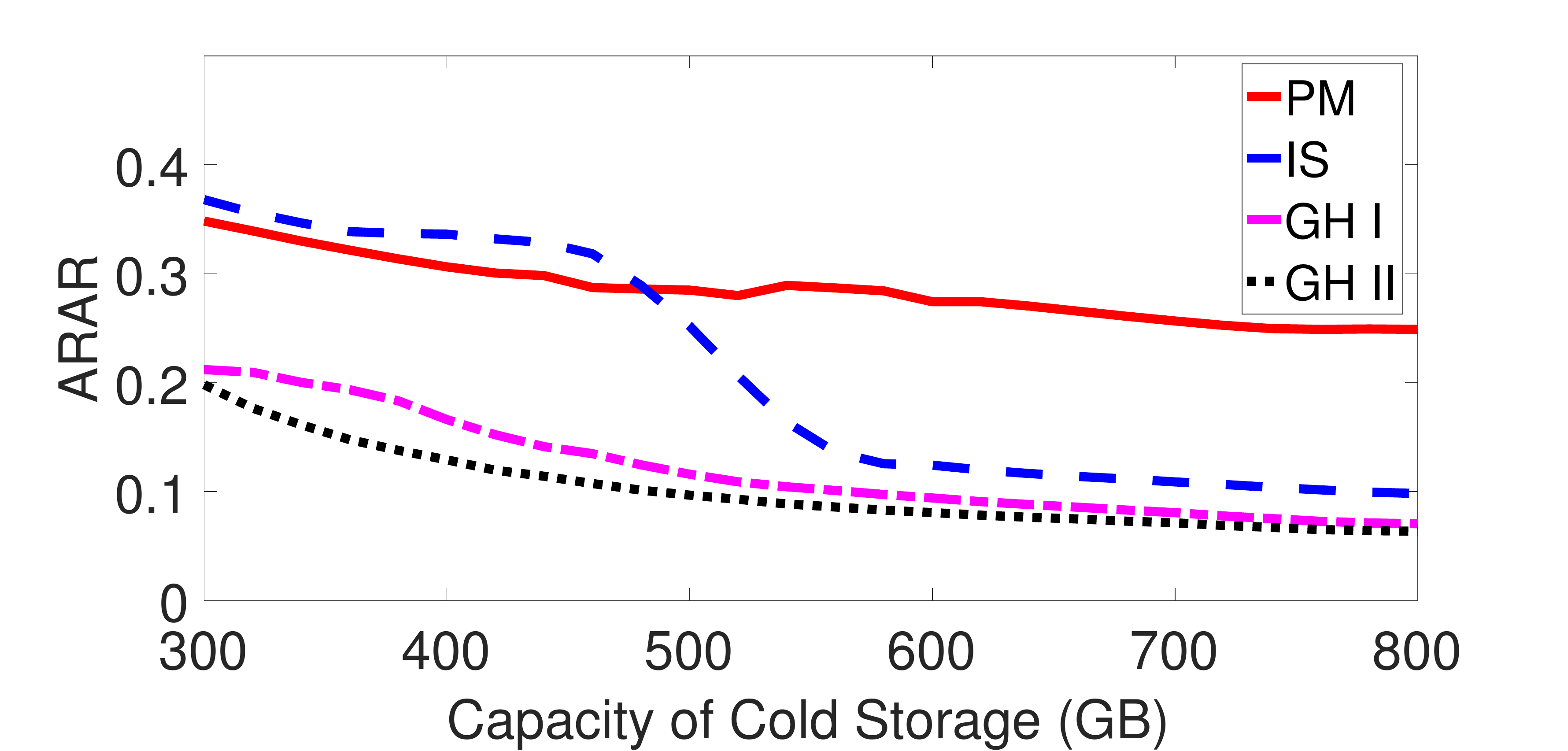}
   \label{fig:1.3}
 }
 \subfigure[Impact of the cold storage capacity on the number of files that are accepted for storage and number of bids accepted for access]{
   \includegraphics[scale =0.21] {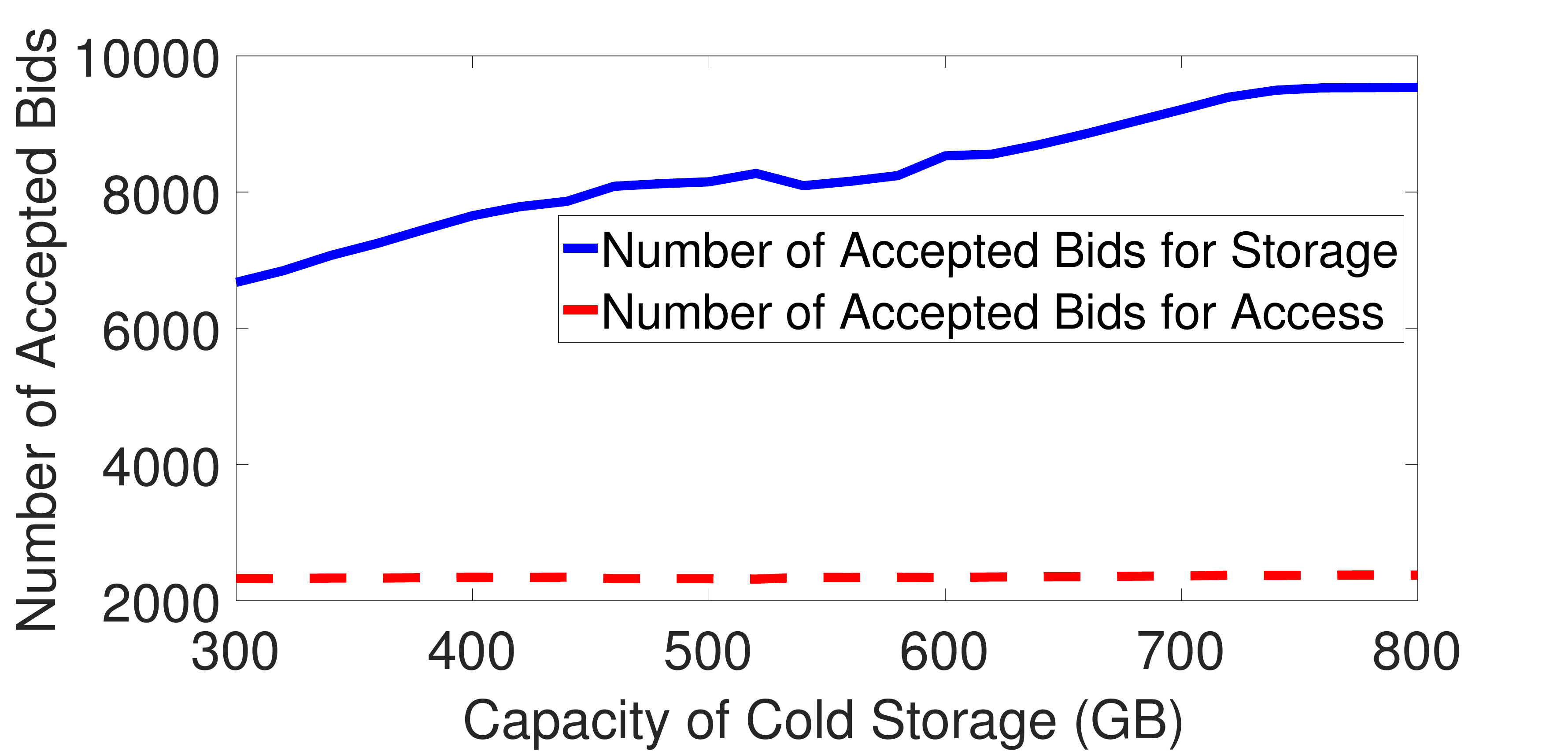}
   \label{fig:1.4}
 }
\caption{\small The impact of the variation of Cold storage capacity. The cost of cold storage and hot storage is $50$ cents per GB and $80$ cents per GB respectively. The service rates of the cold and hot storage are $100$Gb/s and $200$Gb/s. }
\vspace{-0.25in}
\label{fig1}
\end{figure}

 To demonstrate the effectiveness of our proposed heuristic, we fix the hot storage capacity as 200 GB and vary the capacity of cold storage ($C_1$) from 300 GB to 800 GB in the steps of 20 GB, and plot the total profits and access request acceptance rate (cf. (\ref{ass}))  by using different methods. Fig. \ref{fig:1.1} shows that as the capacity of cold storage ( or ratio $C_{1}/C_{2}$ ) increases, the profits obtained from all the algorithms except IS increases; however, the rate of increase decreases with the increase in the $C_1$.  Fig.~\ref{fig:1.4} provides the reason behind this variation. As $C_{1}$ increases, more files can be stored in cold storage which increases profits. However, if $C_1$ is large enough, no more files can be stored, thus, the profit becomes saturated. Fig.~\ref{fig:1.3} shows that because of the lower service rate of the cold storage, the profit from accessing the file does  not increase with the increase in the capacity of cold storage. This is because files may be stored but cannot be accessed as it violates the latency constraint.   {\em Thus, increasing the cold storage capacity without increasing the hot storage capacity will not fetch more profit after a certain threshold.} Note from Fig. \ref{fig:1.1} that the profit achieved by Algorithm IS increases initially, then decreases and again increases as $C_1$ increases. Intuitively, the Algorithm IS does not consider the second stage decision variables in its first-stage decision. Hence, more files are stored in the cold storage as it has a lower cost. However, as almost all the files are stored in the cold storage, the files cannot be accessed fast enough which does not increase the profit from accepting the access bids. Similarly, the profits earned by Algorithms GH I and GH II do not increase much as $C_1$ increases as they do not consider the second-stage recourse decisions in the first stage.  
  Also note that when $C_1$ is large, our algorithm outperforms the other base-line algorithms by 50\%. This shows the virtue of the consideration of the second stage recourse decision in the first-stage decision. 

From the results in Fig. \ref{fig:1.3}, the Access Req. Acceptance Rate (ARAR, cf.(\ref{ass}))  decreases as the capacity of cold storage increases. This is because, by increasing the capacity of cold storage, the number of files accepted for storage increase, however with limited cold storage service rate, the number of files accepted for access is limited (which is also verified by Fig.~\ref{fig:1.4}). Consequently, the ARAR decreases. Note that the ARAR corresponding to Algorithm IS is higher compared to our proposed method when $C_1$ is low as vary number of files are stored by the IS compared to our proposed method.

\subsection{Impact of Service Rate of Hot Storage}

\begin{figure}[h]
\centering

\subfigure[Impact of the hot storage service rate on profits]{
   \includegraphics[scale =0.21] {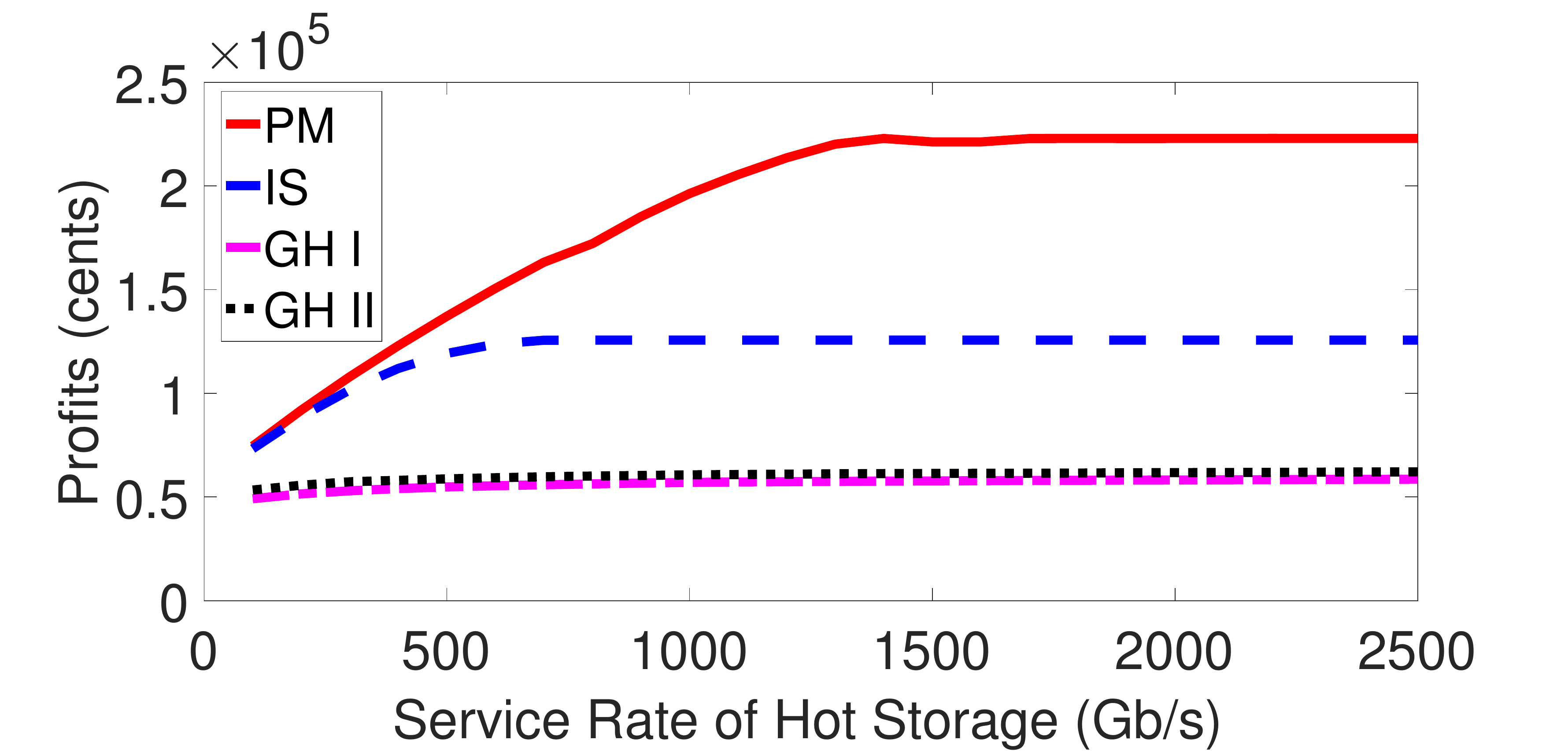}
   \label{fig:2.1}
 }
\subfigure[Impact of hot storage service rate on profits obtained from file storage and access]{
   \includegraphics[scale =0.21] {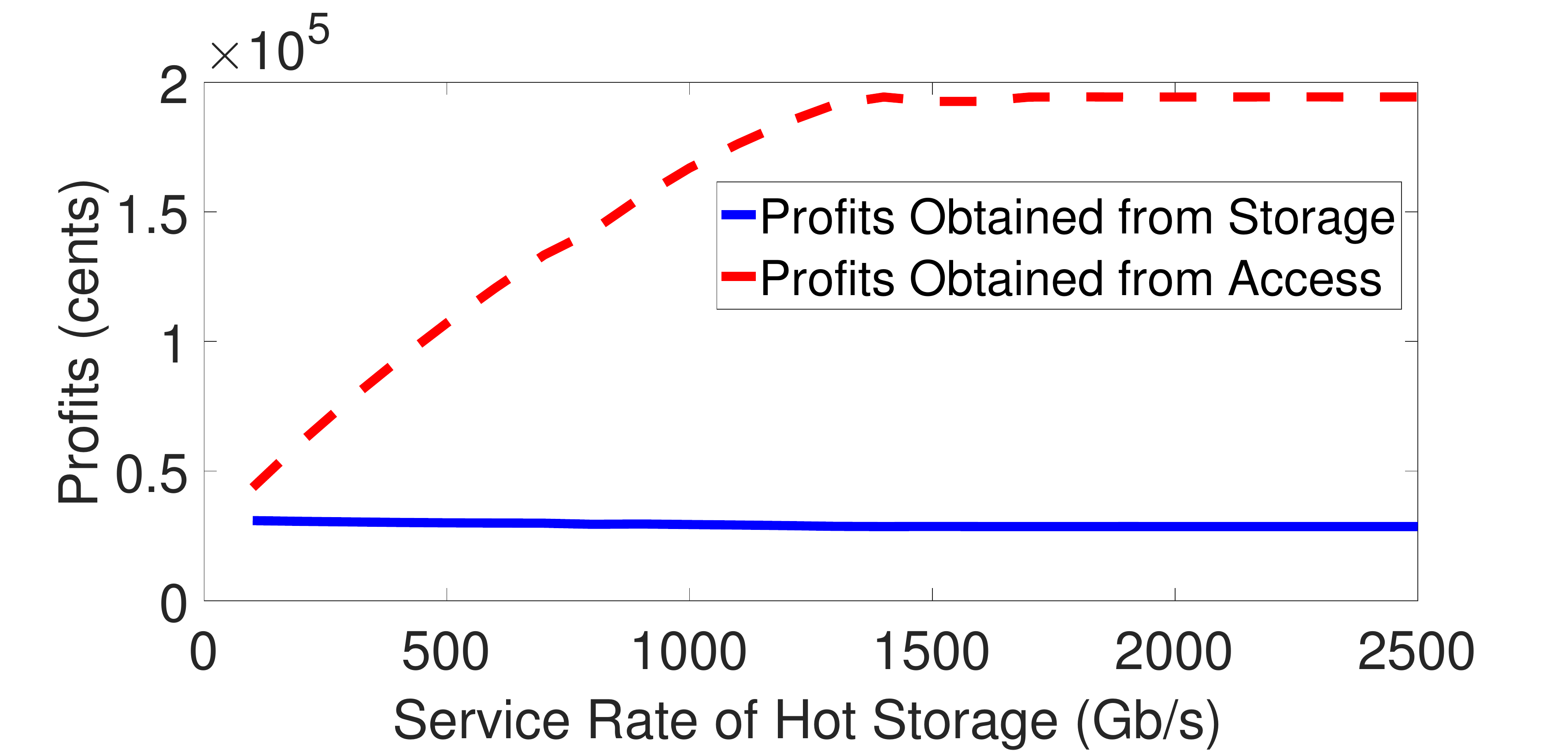}
   \label{fig:2.2}
 }
 \subfigure[Impact of hot storage service rate on ARAR]{
   \includegraphics[scale =0.21] {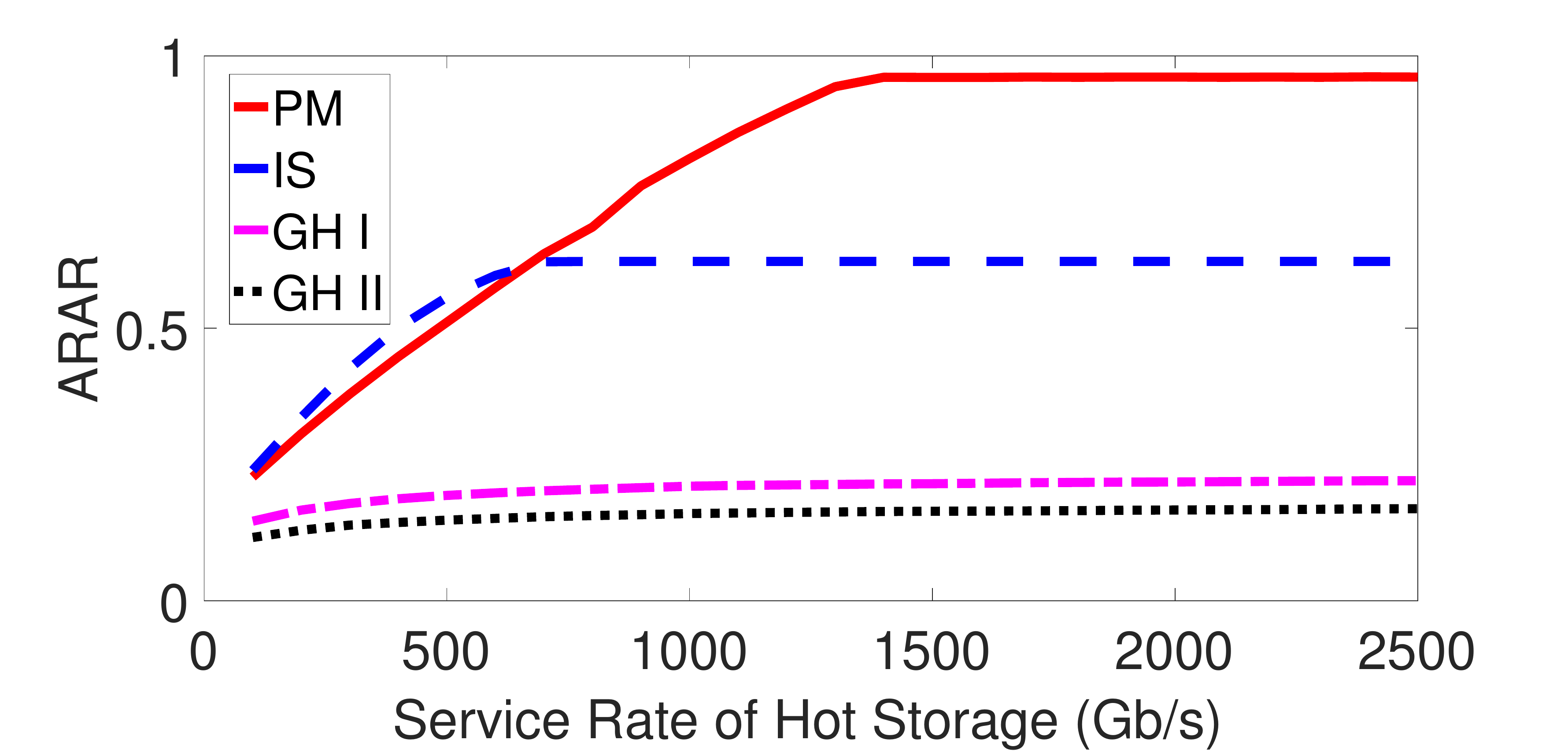}
   \label{fig:2.3}
 }
 \subfigure[Impact of  hot storage service rate on the number of bids accepted for storage and access ]{
   \includegraphics[scale =0.21] {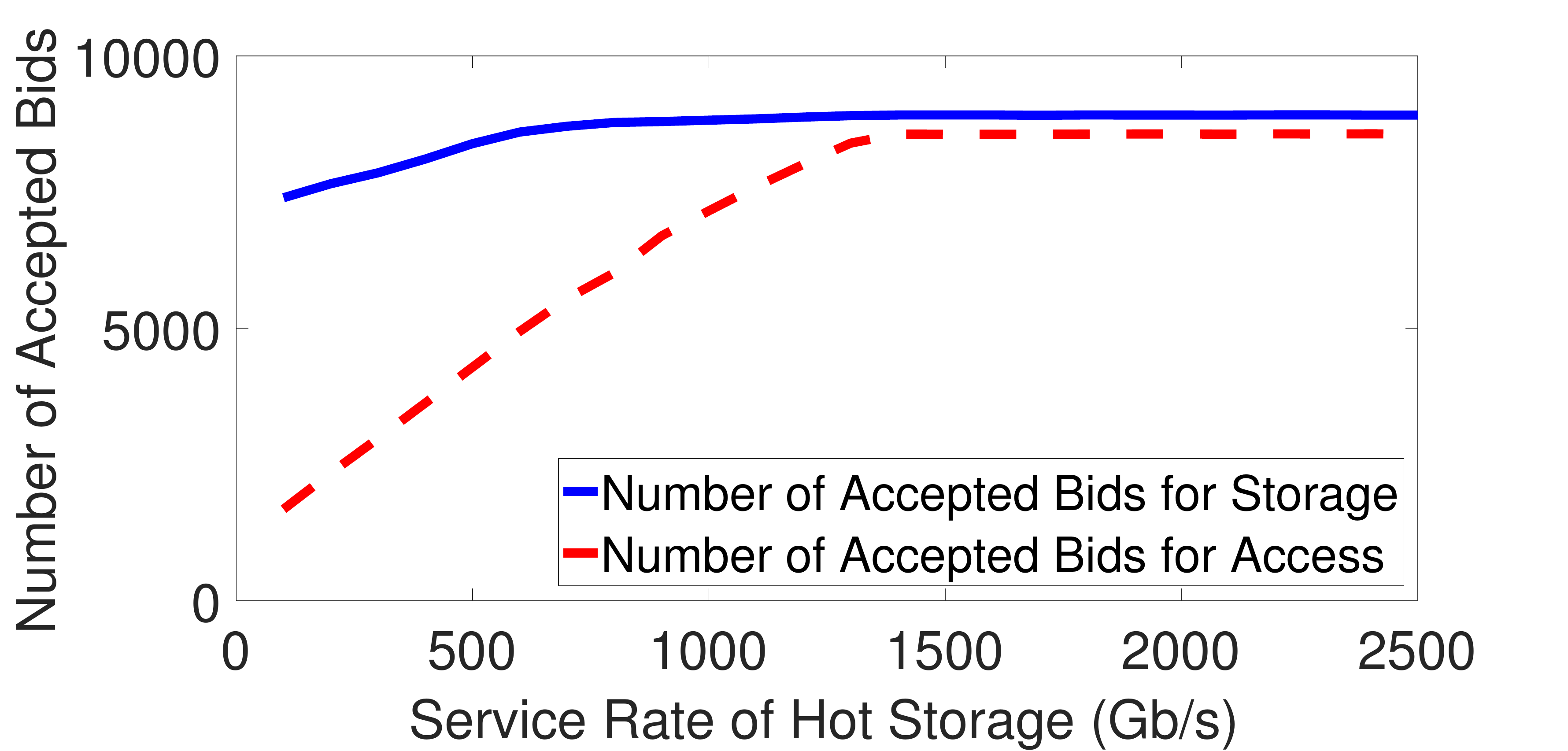}
   \label{fig:2.4}
 }
\caption{\small Profits and ARAR as a function of hot storage service rate. The cost of the cold and hot storage are $50$ cents and $80$ cents per GB respectively. $C_1=400$ GB and $C_2=200$GB. }
\vspace{-0.1in}
\label{fig2}
\end{figure}

In this subsection, we assume that the  service rate of cold storage is   100Gb/s, and the service rate of hot storage is varied from 100 to 2500 Gb/s in steps of 100Gb/s. Fig. \ref{fig:2.1} shows that the profit increases as the service rate of hot storage increases. This is because more access requests can be accepted as the number of accepted bids increase (Fig.~\ref{fig:2.4}). Our proposed method outperforms the other methods. In fact, the profit can be increased by 100\% compared to the other methods for high service rate. 

Note from Fig.~\ref{fig:2.2} that the profit from the second-stage auction increases significantly in our proposed method. However, the profit from the storing files does not increase much. This is because when the service rate is low mostly those files who have lower access requests or lower latency requirements (but can pay more) are accepted.  As Fig.~\ref{fig:2.4} suggests, when the service rate is high, more files are stored, however, the number of accepted access bids increases significantly. This suggests that when the service rate is high, the files which bid lower prices for storage, but still can pay more because of the high access rates are accepted for storing. Hence, the profit from storing the files remains constant as the storage capacities remain constant, however, the profits from accepting bids increase. 


Fig.~\ref{fig:2.3} shows that the ARAR increases with the service rate of the hot storage for all the algorithms. When the service rate is high, more  access bids are accepted, however, the bids accepted for storage remains the same (Fig.~\ref{fig:2.4}). Hence, the ARAR is high (cf.(\ref{ass})). When the service rate is low, mostly the files those have lower access requests are stored. However, the IS still can store files which have higher access rates if they pay more because it solves the two stage problem independently. Hence, the IS can achieve more ARAR in this case. However, when the service rate of the hot storage exceeds a threshold, the ARAR attained by our proposed method is the highest. The ARAR attained by the greedy heuristics GH I and GH II are strictly lower compared to our proposed method. 

\subsection{Impact of Storage Cost of Hot Storage}
\begin{figure}[ht]
\centering
\subfigure[Impact of  hot storage cost on profits]{
   \includegraphics[scale =0.21] {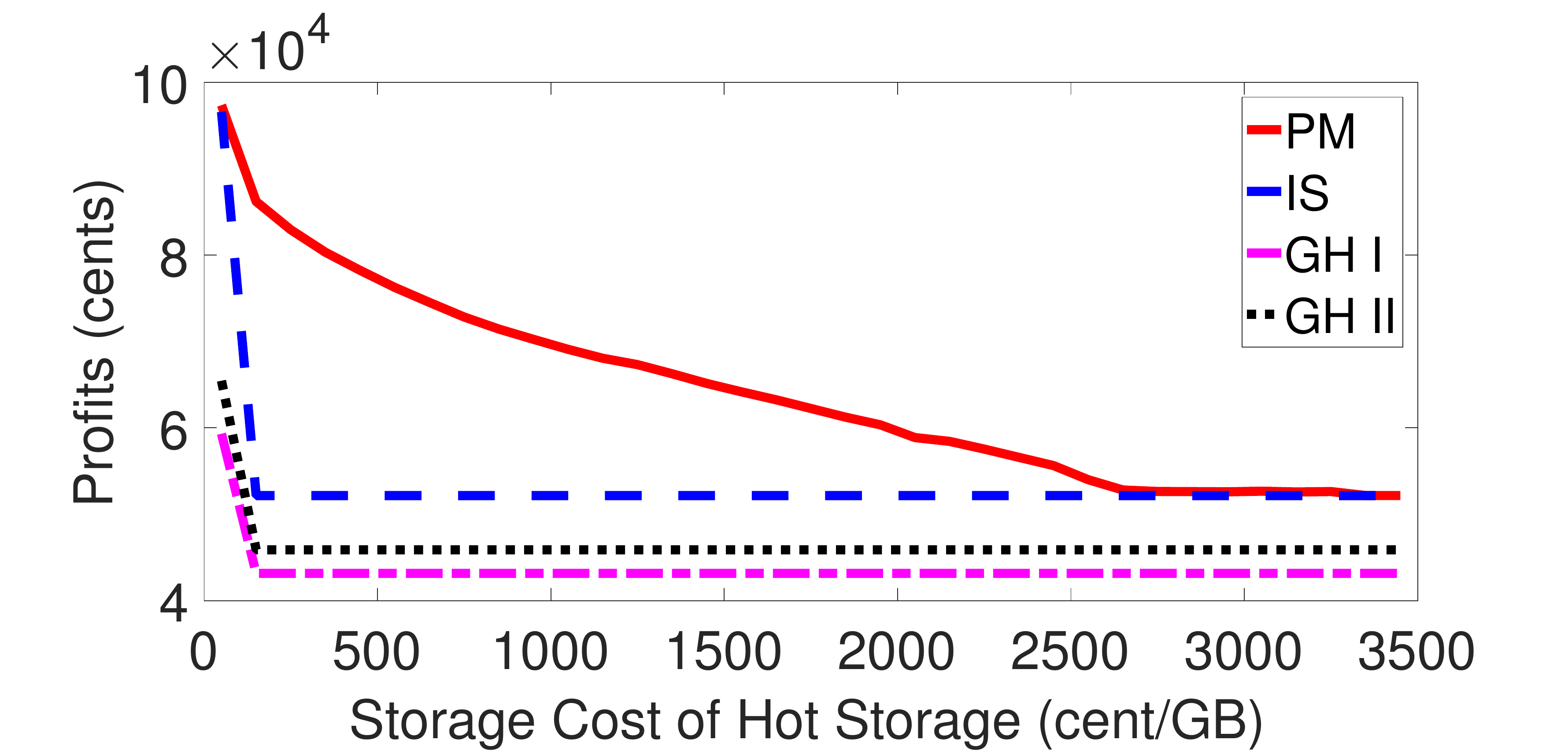}
   \label{fig:3.1}
 }
\subfigure[Impact of  hot storage cost on profits obtained from the file storage and access]{
   \includegraphics[scale =0.21] {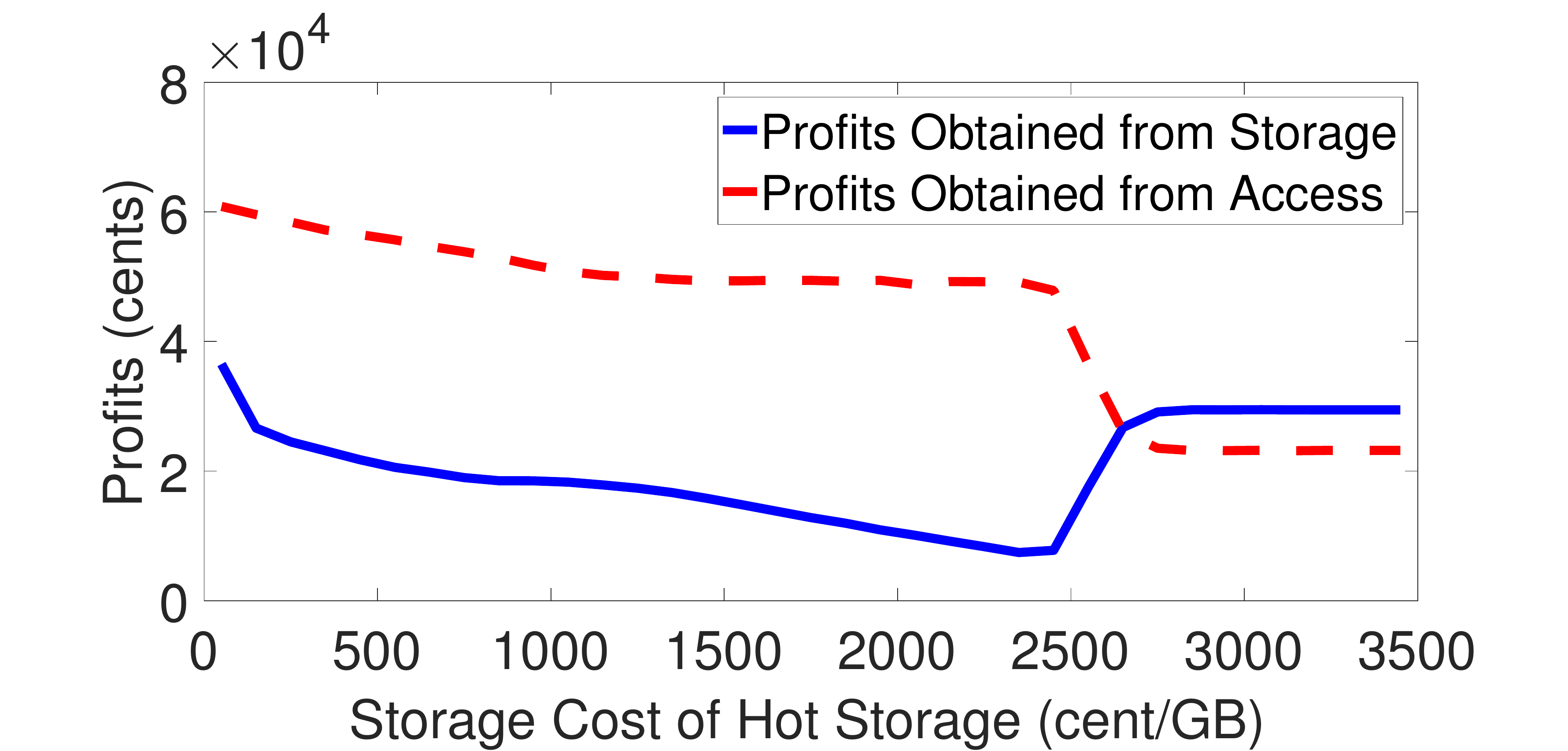}
   \label{fig:3.2}
 }
 \subfigure[Impact of hot  storage cost on ARAR]{
   \includegraphics[scale =0.21] {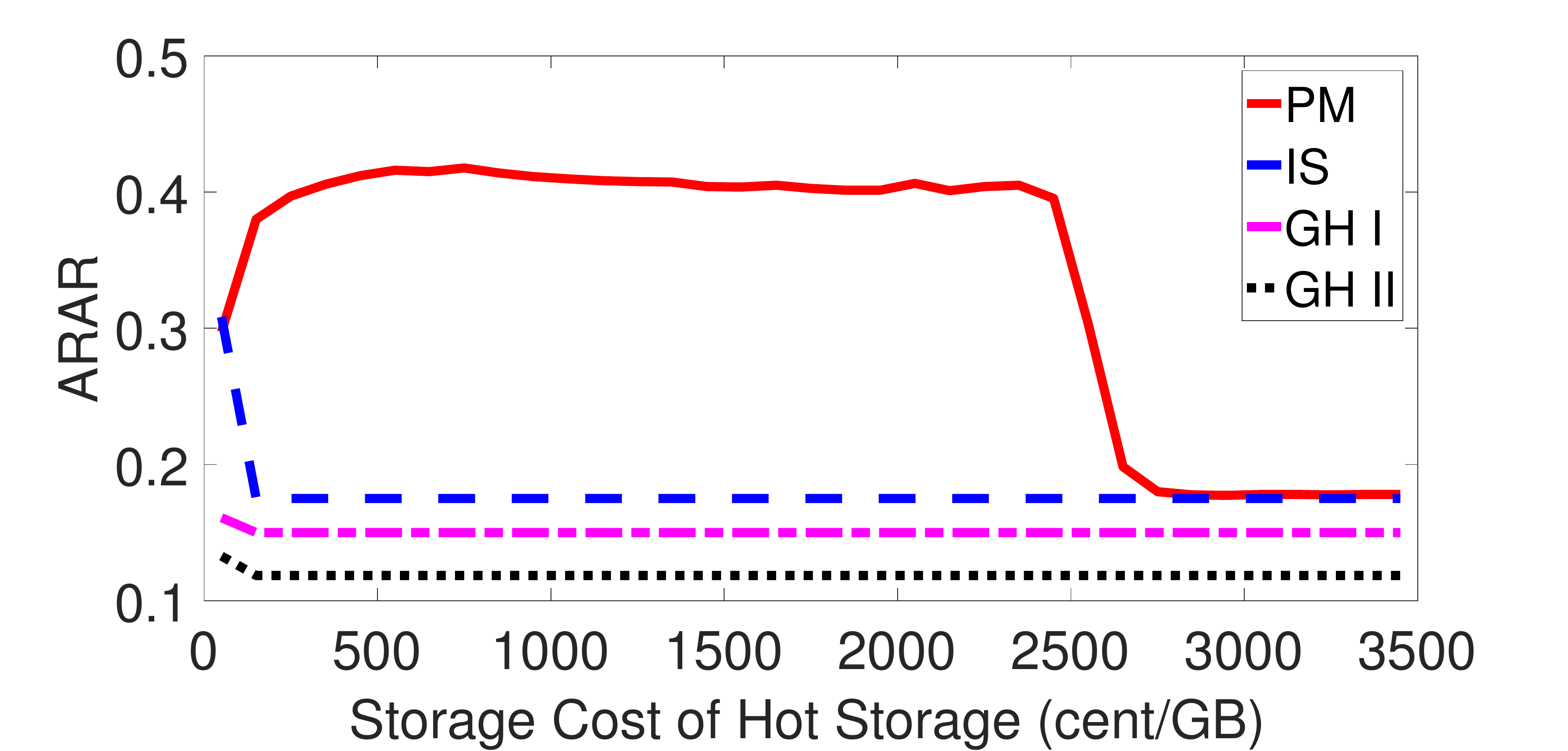}
   \label{fig:3.3}
 }
 \subfigure[Impact of hot storage cost on  the number of accepted bids for storage and access]{
   \includegraphics[scale =0.21] {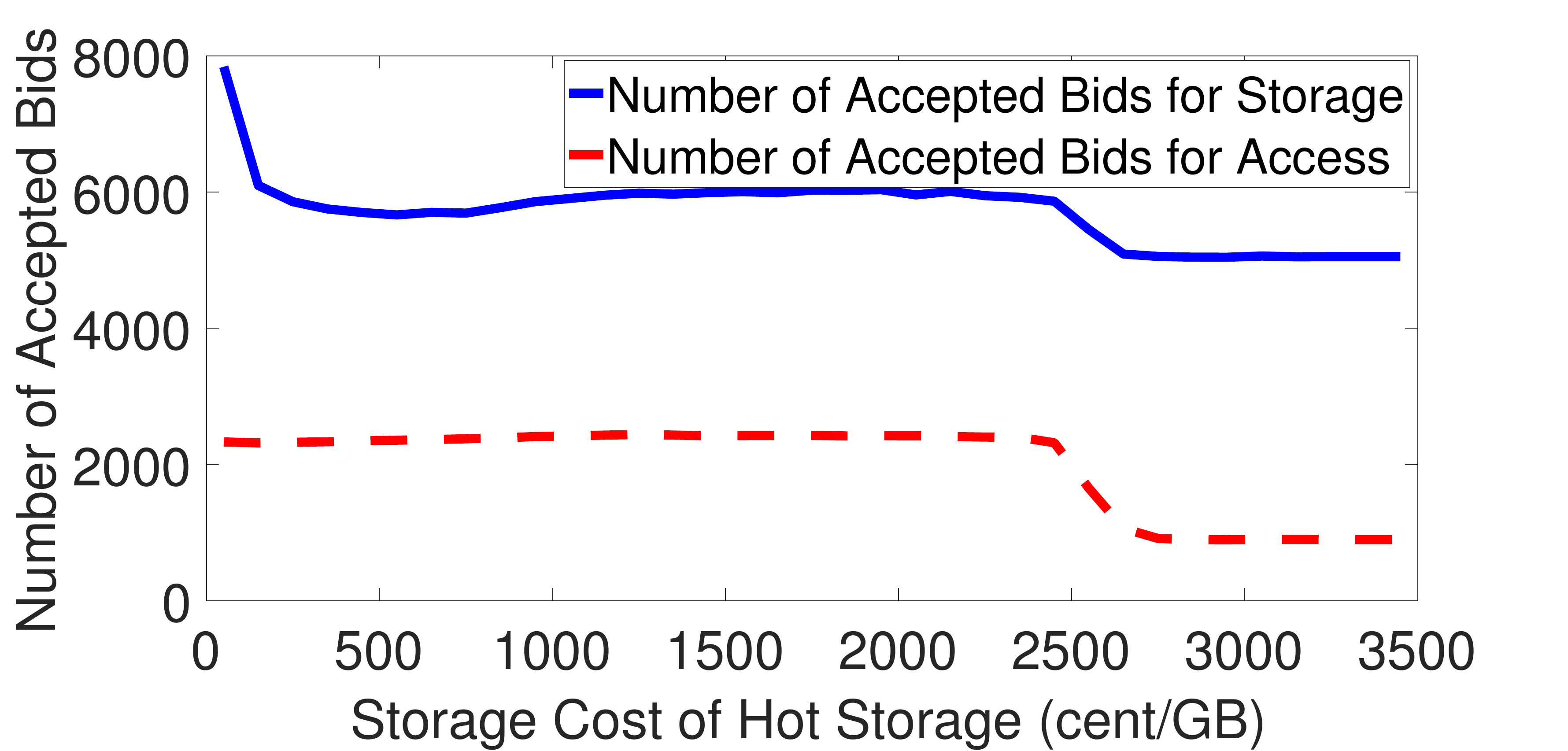}
   \label{fig:3.4}
 }
\caption{\small Profit and the ARAR as a function of the hot storage cost. $C_1=400$ GB and $C_2=200$ GB. The service rate of the cold and hot storage are respectively $100$Gb/s and $200$ Gb/s.}
\vspace{-.1in}
\label{fig3}
\end{figure}

In this subsection, we assume that the storage cost of cold storage is $50$ cents per GB, and the storage cost of hot storage is varied from $50$ to $3450$ cents per GB in the step of $100$ cents per GB.  Fig.~\ref{fig:3.1} shows that as the hot storage cost increases the profit decreases. This is because most of the files are stored in the cold storage which decreases the profit as fewer number of access requests are accepted which is also verified from Fig.~\ref{fig:3.4}.  Our proposed method outperforms the baseline algorithms by more than 60\% when the storage cost is  neither too high nor too low. When the hot storage cost is too low, more  files are stored in the hot storage in the first stage and thus, more profit can be attained in the second stage by accepting more access bids. Hence, the profit attained by IS is close to our proposed method when the hot storage cost is low. Note that the profit attained by the IS is also very close to our proposed method when the hot storage cost is high. This is because IS inherently stores more files in the cold storage in the first stage.  Since the greedy algorithms GH I and GH II do not optimize the second-stage decision, the profits attained by those are slightly lower compared to the IS.  

Note from Fig.~\ref{fig:3.2} that the profit in our proposed method from accessing the files decrease with an increase in the cost of hot storage as more files are stored in the cold storage. Though the overall profit decreases, the profit from storage increases. This is because the files which can pay more but do not have low latency requirements can be stored in the cold storage. 

We also plot the impact of  the storage cost of hot storage on the ARAR in Fig. \ref{fig:3.3}. One interesting trend is the access rate obtained from the proposed method first increases, and decrease until close to the one gained from IS, while the others decrease and then go stable with the increase in the cost. This is because we combine the two -stage decision process, thus, when the cost is moderate, the number of files that are stored decreases without decreasing the number of accepted access bids as shown in Fig.~\ref{fig:3.4}. Hence, the denominator in (\ref{ass}) decreases which increases  the ARAR. Note that once the hot storage cost is very high, the number of accepted bids also decrease as the latency requirement may not be met because too little files are stored in the expensive hot storage. Hence, the ARAR decreases at very high cost. On the other hand, in the other algorithms, as the cost of the hot storage increases, very few files are stored in the hot storage; thus, very little files can be accessed which decreases the ARAR. However, if the cost is too high, no more file can be stored in the hot storage which makes the ARAR constant.

\subsection{Impact of Storage Cost of Cold Storage}

\begin{figure}[ht]
\centering
\subfigure[Impact of  cold storage cost on profits]{
   \includegraphics[scale =0.21] {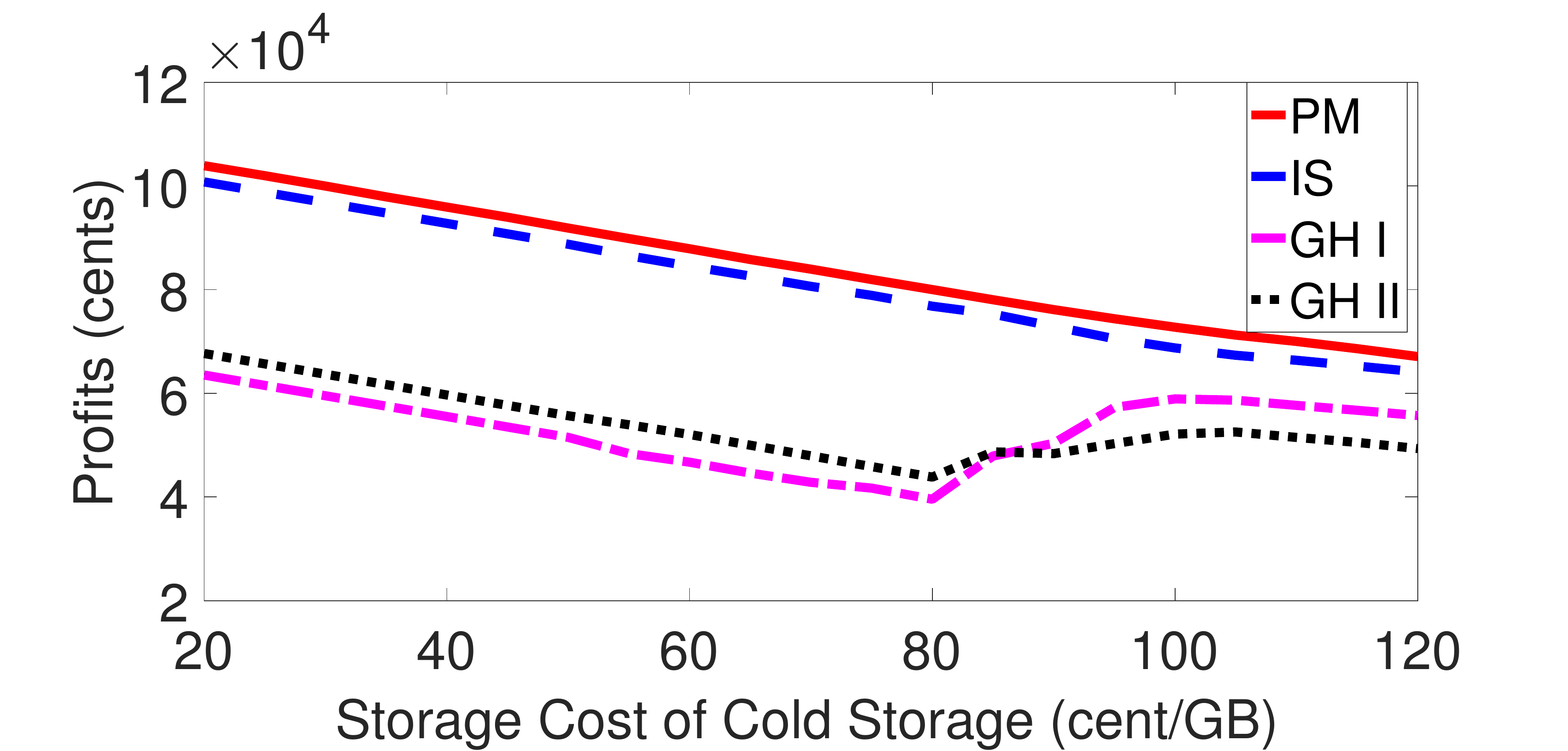}
   \label{fig:4.1}
 }
\subfigure[Impact of  cold storage cost on profits obtained from the file storage and access]{
   \includegraphics[scale =0.21] {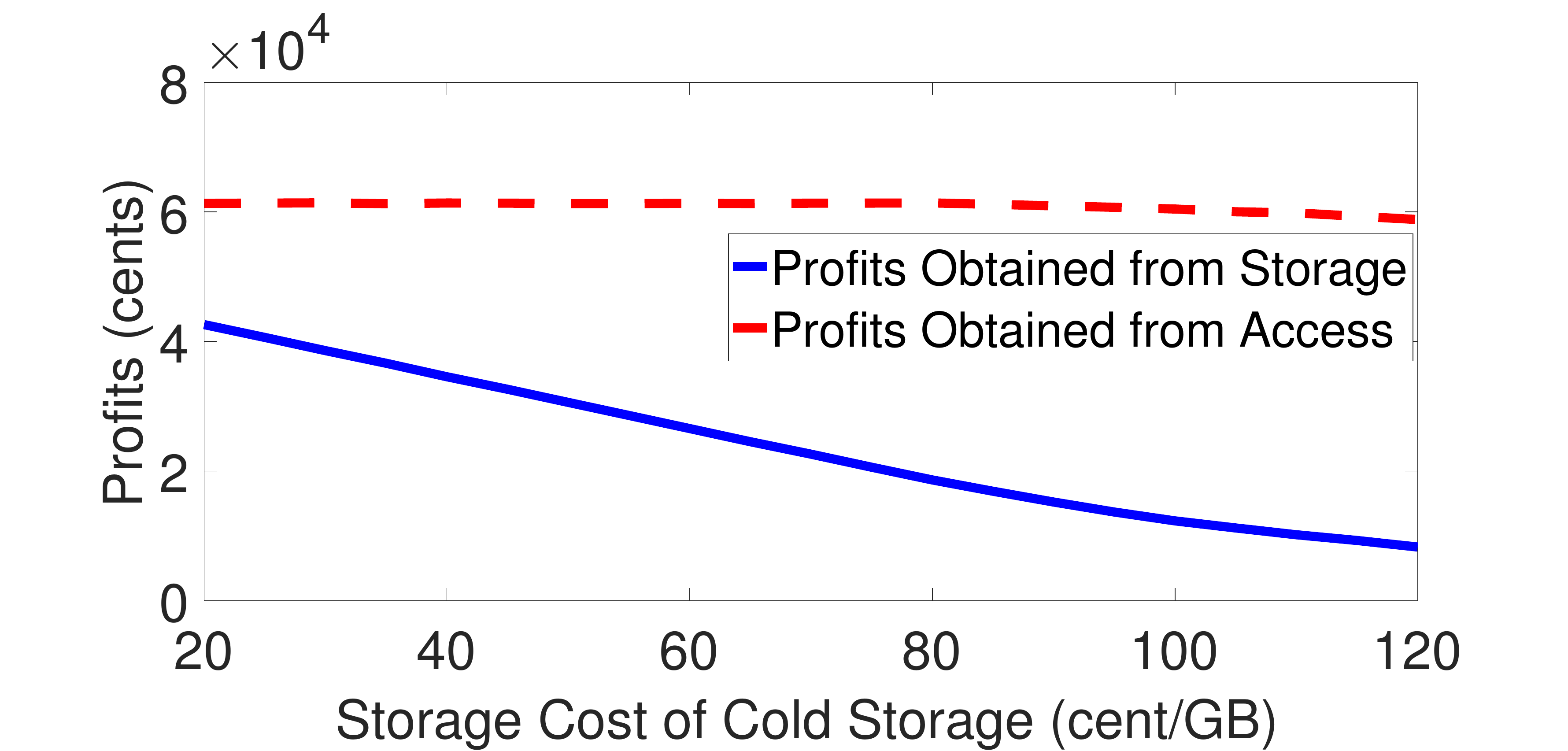}
   \label{fig:4.2}
 }
 \subfigure[Impact of cold  storage cost on ARAR]{
   \includegraphics[scale =0.21] {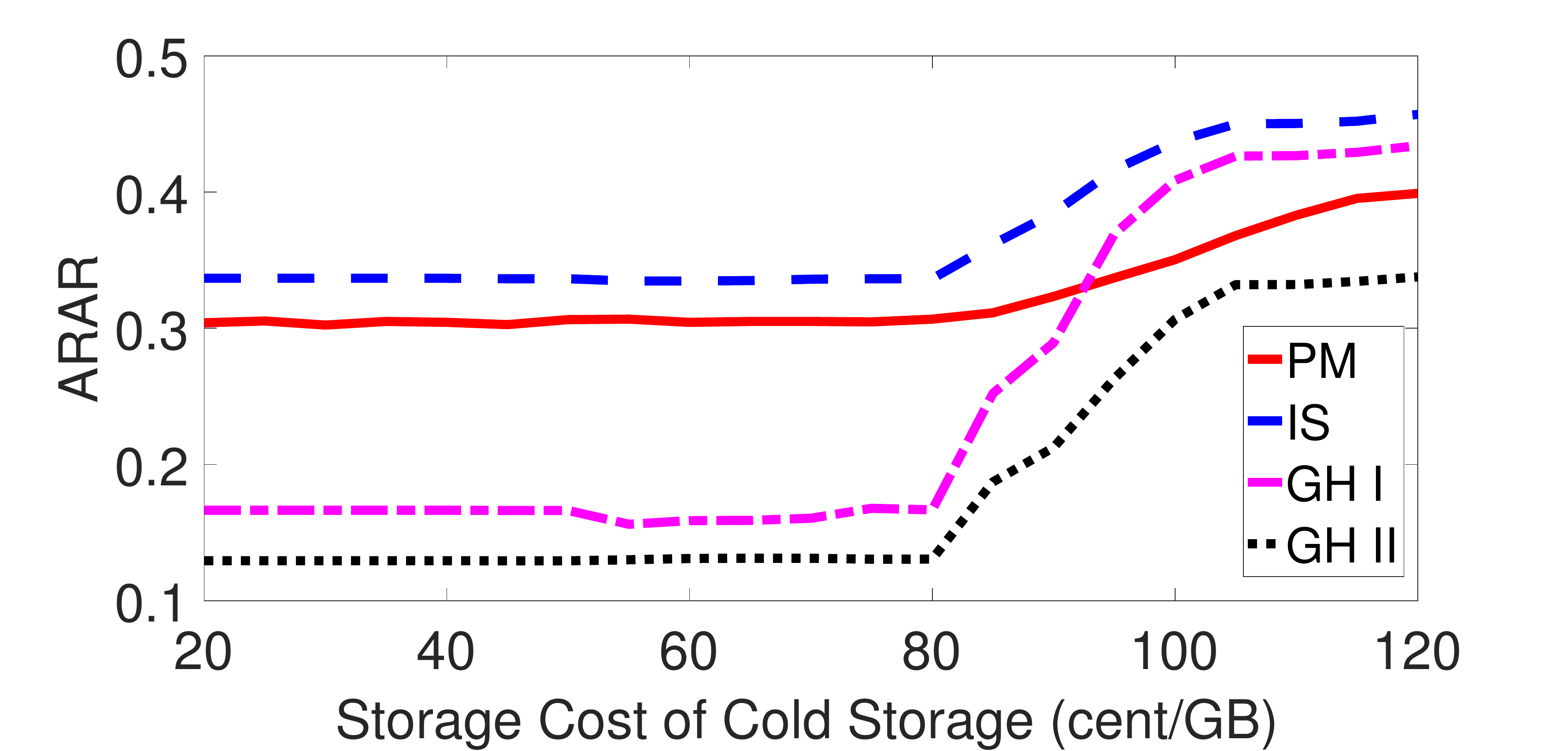}
   \label{fig:4.3}
 }
 \subfigure[Impact of cold storage cost on  the number of accepted bids for storage and access]{
   \includegraphics[scale =0.21] {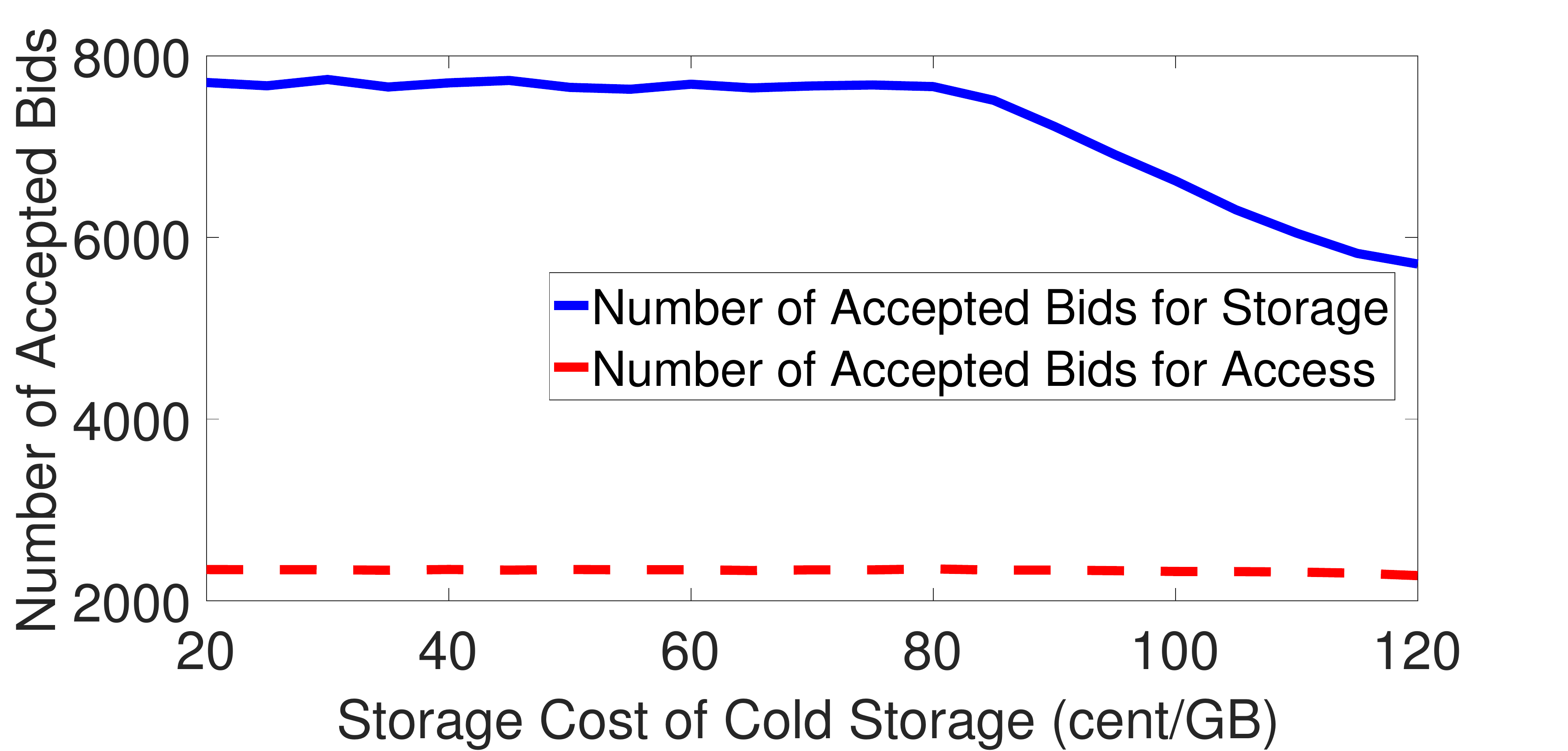}
   \label{fig:4.4}
 }
\caption{\small Profit and the ARAR as a function of the cold storage cost. $C_1=400$GB and $C_2=200$GB. The service rates of the cold and hot storages are $100$Gb/s and $200$Gb/s.}
\vspace{-.1in}
\label{fig4}
\end{figure}

In this subsection, we assume that the storage cost of hot storage is $80$ cents per GB, and vary the storage cost of cold storage  from $20$ cents to $120$  cents per GB in  steps of $5$ cents.  Fig.~\ref{fig:4.1} shows that as the cold storage cost increases the profit attained by our proposed method decreases.  As Fig.~\ref{fig:4.4} shows that when the cost of the cold storage increases the lower number of files are stored. Hence, the profit from the storage decreases (Fig.~\ref{fig:4.2}). The profit from accepting the access bids remain the same as the number of files stored in the hot storage almost remains the same.  Note from Fig.~\ref{fig:4.1} the profits gained from GH I and GH II decrease first and increase dramatically when the cold storage cost is $80$ cents per Gb (i.e., the hot storage cost). The main reason behind this is that beyond this point the hot storage has lower storage cost but higher service rate after that point, which means that the files are prioritized to be stored in the hot storage rather than the cold storage. Thus, profits from accepting the access bids increase drastically for those greedy heuristics and become close to the optimal.  

Note that as the cold storage cost exceeds the hot storage cost fewer number of files are stored. However, the accepted access bids remain the same as depicted in Fig.~\ref{fig:4.4}. Hence, the denominator of (\ref{ass}) decreases without decreasing the numerator. Thus, ARAR increases. Note that the IS does better in terms of ARAR. The greedy heuristic GH I also gives a higher ARAR when the cold storage cost exceeds the hot storage cost. This is because in the first-stage decision the larger files are mostly now stored in the hot storage  rather than the cold storage as they pay more. However, the total number of files accepted for cold and hot storage decreases. The larger files are also likely to bid higher in the second stage, thus, the ratio in (\ref{ass}) increases for GH I as it accepts bids in the descending order of the bids per size. However, GH II accepts bids in a different manner; hence, the ARAR attained by the GH II is strictly lower. 

\section{Conclusions and Future Work}\label{sec:concl}

In order to store the files in the {\em cold storage} or {\em hot storage}, this paper propose a systematic framework for two-stage, latency-dependent bidding, which aims to maximize the cloud storage provider's net profit in tiered cloud storage systems where tenants may have different budgets, access patterns and performance requirements. In the proposed two-stage, latency-aware bidding mechanism, the users can bid for storage and access, in two separate stages, without knowing how the CSP stores the contents. The proposed optimization is modeled as a mixed-integer nonlinear program (MINLP), for which  an efficient heuristic is proposed. The numerical results demonstrate that the profits obtained from the proposed method are higher than those of other methods, and the access request acceptance rate (ARAR) also dominates that of other methods as the capacity of the cold storage or the service rate of hot storage increases.

\textcolor{black}{
In reality, the users may be strategic. In other words, the user may optimize the bid in order to maximize its own profit. Our model captures some essence of the strategic users. The user's maximum possible bid will be the price that it will pay if it selects another CSP for the same guaranteed latency requirement. Thus, our approach can be easily extended to the above scenario where we can consider a upper limit of the bid of a user. We, however, did not consider the full essence. For example, the users may not bid truthfully. We will consider such a scenario in future.}

Another interesting direction for the future is to extend the model for erasure coding storage system where multiple copies ($n$) of the files can be stored and a subset of those copies ($k$) are required to be fetched to get the original file. In that case, the CSP would need to select the $n$ storage systems to store the file and among those $k$ copies are needed to be fetched to get the original file.

\section{Acknowledgment}
The authors would like to thank Yu Xiang and Robin Chen of AT\&T Labs-Research for helpful discussions.  This work was supported in part by the National Science
Foundation under Grant no. CNS-1618335.


%

\appendices
\section{Moments of the service time of a file request}\label{apdx:moments}
In this Appendix, we will derive the first and second moments of the service time of a file request at storage $j$ in $k$-th  scenario, which will be used further to prove Theorem  \ref{lm:waitingtime}. More precisely, we will show the following result. 

\begin{lemma}\label{lm:statistics}
$X_{j}^{k}$, the service time of a file request at storage $j$ in $k$-th  scenario,has a distribution with mean
\begin{equation} \label{eq3}
\mathrm{E}[ X_{j}^{k}]= \frac{ \sum_{i}\lambda_{i}^{k}\pi_{i,j}^{k}S_{i} }{ \mu_{j} \sum_{i}\lambda_{i}^{k}\pi_{i,j}^{k}  }
\end{equation}
and second moment
\begin{equation} \label{eq4}
\mathrm{E}[(X_{j}^{k})^{2}]= \frac{2 \sum_{i}\lambda_{i}^{k}\pi_{i,j}^{k}S_{i}^{2} }{\mu_{j}^{2} \sum_{i} \lambda_{i}^{k} \pi_{i,j}^{k}  }.
\end{equation}
\end{lemma}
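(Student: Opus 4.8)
The plan is to compute the two moments by conditioning on which file is being served, after first identifying the service-time distribution that the probabilistic scheduling induces at storage $j$. First I would observe that the access requests routed to storage $j$ in scenario $k$ form a superposition of independent Poisson streams, one per file $i$, with stream rate $\lambda_i^k \pi_{i,j}^k$; hence the aggregate arrival process at storage $j$ is itself Poisson with rate $\Lambda_j^k = g = \sum_i \lambda_i^k \pi_{i,j}^k$, which justifies the identification $\Lambda_j^k = g$ used in the proof of Theorem~\ref{lm:waitingtime}. By the standard merging (thinning) property of independent Poisson processes, a generic request arriving at storage $j$ is a request for file $i$ with probability $\lambda_i^k \pi_{i,j}^k / g$, independently of the rest of the history.

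Next I would fix the service-time model for a single file. Because storage $j$ serves at rate $\mu_j$ (in Mb/s) and file $i$ has size $S_i$, the service time of a file-$i$ request is taken to be exponentially distributed with mean $S_i/\mu_j$, consistent with the M/G/1 modeling assumption stated in Section~\ref{sec:formulation}; the overall service time $X_j^k$ is therefore a hyperexponential mixture of these per-file exponentials, weighted by the selection probabilities from the previous step. Conditioning on the served file $i$ and using $\mathrm{E}[Y] = S_i/\mu_j$ together with $\mathrm{E}[Y^2] = 2(S_i/\mu_j)^2$ for an exponential $Y$, I would write
\begin{align}
\mathrm{E}[X_j^k] &= \sum_i \frac{\lambda_i^k \pi_{i,j}^k}{g}\,\frac{S_i}{\mu_j}, \\
\mathrm{E}[(X_j^k)^2] &= \sum_i \frac{\lambda_i^k \pi_{i,j}^k}{g}\,\frac{2 S_i^2}{\mu_j^2}.
\end{align}

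Finally I would simplify using the auxiliary sums $f = \sum_i \lambda_i^k \pi_{i,j}^k S_i$ and $h = \sum_i \lambda_i^k \pi_{i,j}^k S_i^2$, giving $\mathrm{E}[X_j^k] = f/(\mu_j g)$ and $\mathrm{E}[(X_j^k)^2] = 2h/(\mu_j^2 g)$, which are exactly the claimed expressions once $g = \sum_i \lambda_i^k \pi_{i,j}^k$ is substituted back. The point that most needs care — rather than a genuine obstacle — is the factor of $2$ in the second moment: it arises solely from the exponential (as opposed to deterministic) per-file service time, since a deterministic service time $S_i/\mu_j$ would yield $\mathrm{E}[(X_j^k)^2] = h/(\mu_j^2 g)$ and lose that factor. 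I would therefore make the per-file exponential assumption explicit and justify the mixture weights via the Poisson superposition argument; the remaining derivation is a direct conditioning-and-averaging computation.
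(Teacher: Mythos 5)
Your proposal is correct and follows essentially the same route as the paper: both derive the mixture weights $\lambda_i^k\pi_{i,j}^k/\sum_i \lambda_i^k\pi_{i,j}^k$ from the superposition of per-file Poisson streams, model the per-file service time as exponential (the paper writes it as $S^r V_j^t$ with $V_j^t$ an exponential per-MB service time of mean $1/\mu_j$, which is identical to your per-file exponential with mean $S_i/\mu_j$), and obtain both moments by conditioning on the requested file. Your explicit remark that the factor of $2$ in the second moment hinges on the exponential rather than deterministic service assumption is a correct reading of the same computation the paper performs.
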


The rest of the Section proves this result. 

It is easy to verify that under our model, the arrival of file requests at storage $j$ in $k$-th scenario forms a Poisson Process with rate $ \Lambda_{j}^{k} =\sum_{i} \lambda_{i}^{k}\pi_{i,j}^{k}$, which is the superposition of $I$ Poisson Processes each with rate $\lambda_{i}^{k}\pi_{i,j}^{k}$.

Let $S^{r}$ be the (random) requested file size at storage $j$, which is a discrete random variable such that the probability of $S^{r}=S_{i}$ is $\frac{\lambda_{i}^{k}\pi_{i,j}^{k}}{\sum_{i}\lambda_{i}^{k}\pi_{i,j}^{k}}$. Let $V^{t}_{j}$ be the (random) service time of one MB at storage $j$, which is exponentially distributed with mean $\frac{1}{\mu_{j}}$. The the expectation of the service time of a file request at storage $j$ is

\begin{equation} \label{eq1}
\begin{split}
 \mathrm{E}[X_{j}^{k}] & = \mathrm{E}_{V^{t}_{j}}[ \mathrm{E}_{S^{r}}[X_{j}|S^{r}=S_{i} ]  ]  \\
 & =\sum_{S^{r}}\mathrm{E}_{V^{t}_{j}}[X_{j}|S^{r}=S_{i}]\mathrm{P}\{S^{r}=S_{i}\}\\
 & =\sum_{S^{r}}\mathrm{E}_{V^{t}_{j}}[S^{r}V^{t}_{j}|S^{r}=S_{i}]\mathrm{P}\{S^{r}=S_{i}\}\\
 & = \sum_{i}\frac{S_{i}}{\mu_{j}}\frac{\lambda_{i}^{k}\pi_{i,j}^{k}}{\sum_{i}\lambda_{i}^{k}\pi_{i,j}^{k}}\\
 & = \frac{ \sum_{i}\lambda_{i}^{k}\pi_{i,j}^{k}S_{i} }{ \mu_{j} \sum_{i}\lambda_{i}^{k}\pi_{i,j}^{k}  }
\end{split}
\end{equation}

and the associated second moment is
\begin{equation} \label{eq2}
\begin{split}
 \mathrm{E}[(X_{j}^{k})^{2}] & = \mathrm{E}_{V_{j}^{t}}[ \mathrm{E}_{S^{r}}[X_{j}^{2}|S^{r}=S_{i} ]  ]  \\
 & =\sum_{S^{r}}\mathrm{E}_{V_{j}^{t}}[X_{j}^{2}|S^{r}=S_{i}]\mathrm{P}\{S^{r}=S_{i}\}\\
 & =\sum_{S^{r}}\mathrm{E}_{V^{t}_{j}}[(S^{r})^{2}(V^{t}_{j})^{2}|S^{r}=S_{i}]\mathrm{P}\{S^{r}=S_{i}\}\\
 & = \sum_{i}S_{i}^{2}\frac{2}{\mu_{j}^{2}}\frac{\lambda_{i}^{k}\pi_{i,j}^{k}}{\sum_{i}\lambda_{i}^{k}\pi_{i,j}^{k}}\\
 & = \frac{2 \sum_{i}\lambda_{i}^{k}\pi_{i,j}^{k}S_{i}^{2} }{\mu_{j}^{2} \sum_{i} \lambda_{i}^{k} \pi_{i,j}^{k}  }.
\end{split}
\end{equation}



%
%

\ifCLASSOPTIONcaptionsoff
  \newpage
\fi



%

\bibliographystyle{IEEEtran}
\bibliography{acmsmall-sample-bibfile}

\begin{thebibliography}{10}
\providecommand{\url}[1]{#1}
\csname url@samestyle\endcsname
\providecommand{\newblock}{\relax}
\providecommand{\bibinfo}[2]{#2}
\providecommand{\BIBentrySTDinterwordspacing}{\spaceskip=0pt\relax}
\providecommand{\BIBentryALTinterwordstretchfactor}{4}
\providecommand{\BIBentryALTinterwordspacing}{\spaceskip=\fontdimen2\font plus
\BIBentryALTinterwordstretchfactor\fontdimen3\font minus
  \fontdimen4\font\relax}
\providecommand{\BIBforeignlanguage}[2]{{%
\expandafter\ifx\csname l@#1\endcsname\relax
\typeout{** WARNING: IEEEtran.bst: No hyphenation pattern has been}%
\typeout{** loaded for the language `#1'. Using the pattern for}%
\typeout{** the default language instead.}%
\else
\language=\csname l@#1\endcsname
\fi
#2}}
\providecommand{\BIBdecl}{\relax}
\BIBdecl

\bibitem{ad09}
A.~Luca and M.~Bhide, \emph{Storage virtualization for dummies, Hitachi Data
  Systems Edition}.\hskip 1em plus 0.5em minus 0.4em\relax John and Wiley
  Publishing, 2009.

\bibitem{94}
Y.~Xiang, T.~Lan, V.~Aggarwal, and R.~Chen, ``Joint latency and cost
  optimization for erasure-coded data center storage,'' \emph{IEEE/ACM Trans.
  Netw}, vol.~24, no.~4, pp. 2443--2457, 2016.

\bibitem{gu11}
J.~Guerra, H.~Pucha, W.~J.Glider, and R.~Rangaswami, ``Cost effective storage
  using extent based dynamic tiering,'' in \emph{In Proceedings of the 9th
  USENIX Conference on File and Stroage Technologies}.\hskip 1em plus 0.5em
  minus 0.4em\relax Usenix Association, 2011, pp. 20--20.

\bibitem{kim14}
H.~Kim, S.~Seshadri, C.~Dickey, and L.~Chiu, ``Evaluating phase change memory
  for enterprise storage systems: study of caching and tiering approaches,'' in
  \emph{In Proceedings of the 12th USENIX Conference on File and Storage
  Technologies}.\hskip 1em plus 0.5em minus 0.4em\relax Santa Clara, CA, USA:
  Usenix Association, 2014, pp. 33--45.

\bibitem{li14}
Z.~Li, A.~Mukker, and E.~Zadok, ``On the importance of evaluating storage
  systems' $\$$ costs,'' in \emph{In Proceedings of the 6th USENIX Conference
  on Hot Topics in Storage and File Systems}.\hskip 1em plus 0.5em minus
  0.4em\relax Philadelphia PA USA: Usenix Association, 2014, pp. 6--6.

\bibitem{wang14}
H.~Wang and P.~Varman, ``Balancing fairness and efficiency in tiered storae
  systems with bottleneck-aware allocation,'' in \emph{In Proceedings of the
  12th USENIX Conference on File and Storage Technologies}.\hskip 1em plus
  0.5em minus 0.4em\relax Santa Clara, CA, USA: Usenix Association, 2014, pp.
  229--242.

\bibitem{oa11}
O.~Ben-Yehuda, M.~Ben-Yehuda, A.~Schuster, and D.~Tsafrir, ``Deconstructing
  amazon ec2 spot instance pricing,'' in \emph{In Proceedings of the IEEE 3rd
  International Conference on Cloud Computing Technology and Science}.\hskip
  1em plus 0.5em minus 0.4em\relax Athens, Greece: CloudCom 2011, 2011, pp.
  304--311.

\bibitem{i12}
I.~Drago, M.~Mellia, M.~Munaf\'{o}, A.~Sperotto, R.~Sadre, and A.~Pras,
  ``Inside dropbox: understanding personal cloud storage services,'' in
  \emph{In Proceedings of the 12th ACM SIGCOMM Conference on Internet
  Measurement}.\hskip 1em plus 0.5em minus 0.4em\relax Boston, MA, USA: IMC 12,
  2012, pp. 481--494.

\bibitem{tier_store}
L.~Youseff, M.~Butrico, and D.~D. Silva, ``Toward a unified ontology of cloud
  computing,'' in \emph{2008 Grid Computing Environments Workshop}, Nov 2008,
  pp. 1--10.

\bibitem{Naldi13}
\BIBentryALTinterwordspacing
M.~Naldi and L.~Mastroeni, ``Cloud storage pricing: A comparison of current
  practices,'' in \emph{Proceedings of the 2013 International Workshop on Hot
  Topics in Cloud Services}, ser. HotTopiCS '13.\hskip 1em plus 0.5em minus
  0.4em\relax New York, NY, USA: ACM, 2013, pp. 27--34. [Online]. Available:
  \url{http://doi.acm.org/10.1145/2462307.2462315}
\BIBentrySTDinterwordspacing

\bibitem{amazonaws}
Amazon, \url{https://aws.amazon.com/s3/pricing/}, 2017, accessed 8th Feb,2017.

\bibitem{drop}
Dropbox, \url{https://www.dropbox.com/business/pricing}, 2017, accessed 8th
  Feb,2017.

\bibitem{google}
Google, \url{https://cloud.google.com/storage/pricing}, 2017, accessed 8th
  Feb,2017.

\bibitem{xu}
H.~Xu and B.~Li, ``Dynamic cloud pricing for revenue maximization,'' \emph{IEEE
  Transactions on Cloud Computing}, vol.~1, no.~2, pp. 158--171, July 2013.

\bibitem{ebay}
Q.~Wang, K.~Ren, and X.~Meng, ``When cloud meets ebay: Towards effective
  pricing for cloud computing,'' in \emph{2012 Proceedings IEEE INFOCOM}, March
  2012, pp. 936--944.

\bibitem{double_auction}
H.~Zhang, H.~Jiang, B.~Li, F.~Liu, A.~V. Vasilakos, and J.~Liu, ``A framework
  for truthful online auctions in cloud computing with heterogeneous user
  demands,'' \emph{IEEE Transactions on Computers}, vol.~65, no.~3, pp.
  805--818, March 2016.

\bibitem{random_auction}
L.~Zhang, Z.~Li, and C.~Wu, ``Dynamic resource provisioning in cloud computing:
  A randomized auction approach,'' in \emph{IEEE INFOCOM 2014 - IEEE Conference
  on Computer Communications}, April 2014, pp. 433--441.

\bibitem{online_auction}
W.~Shi, L.~Zhang, C.~Wu, Z.~Li, and F.~C. Lau, ``An online auction framework
  for dynamic resource provisioning in cloud computing,'' in \emph{The 2014 ACM
  International Conference on Measurement and Modeling of Computer Systems},
  ser. SIGMETRICS '14.\hskip 1em plus 0.5em minus 0.4em\relax New York, NY,
  USA: ACM, 2014, pp. 71--83.

\bibitem{lin}
W.~Y. Lin, G.~Y. Lin, and H.~Y. Wei, ``Dynamic auction mechanism for cloud
  resource allocation,'' in \emph{2010 10th IEEE/ACM International Conference
  on Cluster, Cloud and Grid Computing}, May 2010, pp. 591--592.

\bibitem{chris16}
C.~Esposito, M.~Ficco, F.~Palmieri, and A.~Castiglione, ``Smart cloud storage
  service selection based on fuzzy logic, theory of evidence and game theory,''
  \emph{IEEE Transactions on Computers}, vol.~65, no.~8, pp. 2348 -- 2362, Aug.
  2016.

\bibitem{zhou15}
R.~Zhou, Z.~Li, and C.~Wu, ``An online procurement auction for power demand
  response in storage-assisted smart grids,'' in \emph{IEEE INFOCOM 2015- IEEE
  Conference on Computer Communications}, April 2015, pp. 2641--2649.

\bibitem{wu17}
Q.~Wu, M.~Zhou, Q.~Zhu, and Y.~Xia, ``\text{VCG} auction-based dynamic pricing
  for multigranularity service composition,'' \emph{IEEE Transactions on
  Automation Science and Engineering}, vol.~PP, no.~99, pp. 1--10, 2017.

\bibitem{ma16}
Q.~Ma, Y.-F. Liu, and J.~Huang, ``Time and location aware mobile data
  pricing,'' \emph{IEEE Transactions on Mobile Computing}, vol.~15, no.~10, pp.
  2599--2613, October 2016.

\bibitem{varian2014vcg}
H.~R. Varian and C.~Harris, ``The \text{VCG} auction in theory and practice,''
  \emph{The American Economic Review}, vol. 104, no.~5, pp. 442--445, 2014.

\bibitem{shva10}
K.~Shvachko, H.~Kuang, S.~Radia, and R.~Chansler, ``The hadoop distributed file
  system,'' in \emph{2010 IEEE 26th Symposium on Mass Storage Systems and
  Technologies (MSST)}.\hskip 1em plus 0.5em minus 0.4em\relax Incline Village
  NV USA: IEEE, 2010, pp. 1--10.

\bibitem{diego11}
D.~Ongaro, S.~M. Rumble, R.~Stutsman, J.~Ousterhout, and M.~Rosenblum, ``Fast
  crash recovery in ramcloud,'' in \emph{SOSP 2011 Proceedings of the
  Twenty-Third ACM Symposium on Operating Systems Principles}, Cascais,
  Portugal, 2011, pp. 29--41.

\bibitem{sanjay03}
S.~Ghemawat, H.~Gobioff, and S.-T. Leung, ``The google file system,'' in
  \emph{SOSP 2003 Proceedings of the 19th ACM Symposium on Operating Systems
  Principles}, New York, USA, 2003, pp. 29--43.

\bibitem{di11}
D.~Niu, Z.~Liu, B.~Li, and S.~Zhao, ``Demand forecast and performance
  prediction in peer-assisted on-demand streaming systems,'' in \emph{2011
  Proceedings IEEE INFOCOM Mini-Conference}.\hskip 1em plus 0.5em minus
  0.4em\relax Shanghai, China: IEEE, 2011, pp. 421--425.

\bibitem{gon11}
G.~G$\ddot{u}$rsun, M.~Crovella, and I.~Matta, ``Describing and forecasting
  video access patterns,'' in \emph{2011 Proceedings IEEE INFOCOM
  Mini-Conference}.\hskip 1em plus 0.5em minus 0.4em\relax Shanghai, China:
  IEEE, 2011, pp. 16--20.

\bibitem{dan14}
G.~Dan and N.~Carlsson, ``Dynamic content allocation for cloud-assisted service
  of periodic workloads,'' in \emph{2014 Proceedings IEEE INFOCOM}.\hskip 1em
  plus 0.5em minus 0.4em\relax Toronto, ON, Canada: IEEE, 2014, pp. 853--861.

\bibitem{kristen15}
K.~Gardner, S.~Zbarsky, S.~Doroudi, M.~Harchol-Balter, E.~Hyyti$\ddot{a}$, and
  A.~Scheller-Wolf, ``Reducing latency via redundant requests: Exact
  analysis,'' in \emph{Proceedings of the 2015 ACM SIGMETRICS International
  Conference on Measurement and Modeling of Computer Systems}, Portland, OR,
  USA, 2015, pp. 347--360.

\bibitem{gauri15}
G.~Joshi, E.~Soljanin, and G.~Wornell, ``Efficient replication of queued tasks
  for latency reduction in cloud systems,'' in \emph{2015 53rd Annual Allerton
  Conference on Communication, Control, and Computing}, Allerton, USA, 2015,
  pp. 107--114.

\bibitem{hae11}
B.~Haeupler, V.~S. Mirrokni, and M.~Zadimoghaddam, ``Online stochastic weighted
  matching: improved approximation algorithms,'' \emph{WINE 2011: Internet and
  Network Economics}, pp. 170--181, 2011.

\bibitem{gia03}
P.~Giaccone, B.~Prabhakar, and D.~Shah, ``Randomized scheduling algorithms for
  high-aggregate bandwidth switches,'' \emph{IEEE Journal on Selected Areas in
  Communications}, vol.~21, no.~4, pp. 546--559, May 2003.

\bibitem{latency}
N.~Shalom, ``Amazon found every 100ms of latency cost them 1\% in sales.''
  \url{http://blog.gigaspaces.com/amazon-found-every-100ms-of-latency-cost-them-1-in-sales/},
  2008, accessed 8th Feb. 2017.

\bibitem{chan1997pollaczek}
W.~Chan, T.-C. Lu, and R.-J. Chen, ``Pollaczek-khinchin formula for the m/g/1
  queue in discrete time with vacations,'' \emph{IEE Proceedings-Computers and
  Digital Techniques}, vol. 144, no.~4, pp. 222--226, 1997.

\bibitem{equivalent_deterministic}
R.~J.-B. Wets, ``Stochastic programs with fixed recourse: The equivalent
  deterministic program,'' \emph{SIAM Review}, vol.~16, no.~3, pp. 309--339,
  1974.

\bibitem{conopt}
A.~S. Drud, ``Conopt: A large-scale grg code,'' \emph{ORSA Journal on
  Computing}, vol.~6, no.~2, pp. 207--216, 1994.

\bibitem{lz04}
W.~Chen and L.~Sha, ``An energy-aware data-centric generic utility based
  approach in wireless sensor networks,'' in \emph{In Proceedings of the third
  international symposium on Information processing in sensor networks}.\hskip
  1em plus 0.5em minus 0.4em\relax Berkeley California USA: IPSN, 2004, pp.
  215--224.

\end{thebibliography}

\end{document}